\documentclass[12pt]{article}
\usepackage{natbib}
\usepackage{multirow}

\usepackage{amsthm}
\usepackage{amssymb}
\usepackage{xcolor}
\usepackage{bbm}
\usepackage{lscape}
\usepackage{caption}
\usepackage{subcaption}
\usepackage{mathtools}
\usepackage{url}
\usepackage{float}
\usepackage{amsmath}
\usepackage{mathalfa}

\usepackage{algorithm}
\usepackage[parfill]{parskip}
\usepackage{algorithmic}
\usepackage[T1]{fontenc}
\usepackage{titling}
\usepackage{listings}
\usepackage{geometry}
\usepackage{dsfont}
\usepackage{wrapfig}
\usepackage{graphicx}
\usepackage[english]{babel}
\newtheorem{theorem}{Theorem}[section]

\numberwithin{equation}{section}

\theoremstyle{definition}

\definecolor{ao}{rgb}{0.0, 0.5, 0.0}

\title{A Multi-Stage Drop-the-Loser Design with Superiority Boundaries}
\author{Peter Greenstreet, Manel Khan, Salmaan Kanji, Pouya Motazedian, \\
Andrew Seely, Stephanie Sibley, Tim Ramsay}
\date{}
\begin{document}

\maketitle

\abstract{Multi-arm multi-stage (MAMS) trials have gained popularity, due to their improved efficiency in evaluating multiple treatments. A traditional MAMS trial often decreases the expected sample size of the trial compared to just running a multi-arm approach, but with the drawback of an increase in maximum sample size. For academic led trials this poses a particular challenge, as funding is typically based on the maximum required sample size. To address this, drop-the-loser designs were introduced, where a fixed number of treatments are dropped at each interim stage, thereby reducing the maximum sample size. In this work, we propose an enhanced multi-stage drop-the-loser design that also allows for early stopping of the entire trial for superiority. This approach aims to retain the benefits of a reduced maximum sample size while also lowering the expected sample size. The proposed design is motivated by a trial in atrial fibrillation.  We derive analytical expressions for the type I error rate, power, and expected sample size, and compare the proposed design's performance to alternative methods. \textcolor{black}{We outline the key requirements for implementing the proposed design and discuss the contexts in which it should be considered.} \textcolor{black}{For the motivating example the results show that the proposed design substantially reduces the expected sample size compared to a standard drop-the-loser design, while lowering the maximum sample size relative to running a traditional MAMS trial or multiple separate trials.}}

\section{Introduction}\label{sec1}
Multi-arm multi-stage trials have the potential to reduce the duration and large cost of clinical trials and therefore have become increasingly popular \citep{StallardNigel2020EADf, NoorNurulaminM2022Uotm, MullardAsher2018Hmdp}. Traditional multi-arm multi-stage (MAMS) trials involve comparing multiple active treatments to a common control treatment at predefined interim stages \citep{WasonJamesM.S.2012Odom, RoystonPatrick2003Ndfm, UrachS.2016Mgsd, SerraAlessandra2022Aorm, greenstreet2023change,greenstreet2023preplanned}. The interim stages can be defined to allow for the stopping of a treatment early for superiority or futility or both. Additionally in some designs the entire trial can stop early for superiority \citep{GreenstreetPeter2021Ammp, MagirrD.2012AgDt}.
\par
\textcolor{black}{While traditional MAMS designs are efficient in terms of expected sample size, they often require a larger maximum sample size compared to running a multi-arm study. Academic investigator led trials can find this design feature particularly difficult, as normal funding mechanisms are often not flexible enough to accommodate variable sample sizes \citep{kairalla2012adaptive}. As a result, applicants typically need to request funding for the maximum possible sample size, so making the trials appear disproportionately expensive \citep{wason2017multi}.} 
\par

This led to the development of the drop-the-loser design, where at each interim a fixed number of active treatments are dropped, therefore reducing the maximum sample size of the design. A well studied approach is the two-stage drop-the-loser design, in which a single interim analysis is used to only let the top performing treatment continue to the second stage \citep{sampson2005drop}. \cite{thall1989two} propose a two-stage design where only one treatment may continue to the second stage and it must also demonstrate sufficient efficacy, otherwise, the trial is terminated. Two-stage drop-the-loser designs with further flexibility have also been proposed using closed testing procedures and combination tests \citep{bretz2006confirmatory, schmidli2006confirmatory}. These two stage designs have also been extended to consider multiple doses and historical control arms with different endpoints \citep{abbas2022two, joshua2010some}. The idea of allowing for multiple interim analyses in which treatments can be dropped was considered in \cite{stallard2008group} with this work focusing on controlling the family-wise error of these designs in a conservative manner. \cite{wason2017multi}, propose a multi-stage drop-the-loser approach in which a set number of treatments is dropped at the end of each stage, with the final treatment being tested for superiority. 
\par
\textcolor{black}{Motivated by a academic investigator led trial into Post Operative Preventative Therapy for AtRial fibrillation after Thoracic Surgery (POPTARTS), we developed methodology building on \cite{wason2017multi} for a drop-the-loser design which also allows for the termination of the entire trial if all remaining treatments are found superior. This design therefore aims to gain the benefits of lower maximum sample size of the drop-the-loser design as well as decrease expected sample size seen in traditional MAMS designs. The motivating trial is further discussed in Section \ref{Sec:Mot}.}
\par
\textcolor{black}{As seen in the motivating example this type of design can be used when one has multiple active treatments of interest, a common control arm and the same primary outcome of interest for each treatment. This design supports the evaluation of multiple hypotheses with one for each active treatment. As discussed above this design could be used when funding mechanisms are based on the maximum sample size however there is a desire to be able to stop the trial early if all remaining treatments are found superior.}
\par
Building on this trial we present the methodology to calculate the required stopping boundaries and sample size for this type of multi-stage superiority drop-the-loser design in order to control the pairwise error rate (PWER) \citep{Choodari-OskooeiBabak2020Anea, SydesMatthewR2009Iiam, HowardDenaR2021Apti} and the power under the least favorable configuration (LFC) \citep{MagirrD.2012AgDt, PushpakomSudeepP2015TaIR, wason2017multi}. \textcolor{black}{The PWER is used as this ensures that the probability of making a type I error for a given active treatment is controlled at the desired level. PWER is the focus as the motivating trial is designed to evaluate distinct treatments, therefore, it is argued that PWER should be used \citep{HowardDenaR2021Apti,HowardDenaR2018Romt,MolloySleF.2022Maip, parker2020non, cook1996multiplicity}. However, there are scenarios where multiplicity adjustments, such as family-wise error rate (FWER) control or false discovery rate (FDR) control, may be necessary \citep{WasonJamesMS2014Cfmi} and it is worth noting that this work does not cover these error controls.}  The power under the LFC is used as it controls the probability that a treatment that has a clinically relevant effect is found if one exists when the other active treatments have an uninteresting treatment effect. The methodology given in Section \ref{Sec:Design} accommodates a trial with any number of active arms and can be used to drop multiple treatments at each interim analysis. This methodology is then applied to the motivating trial example of POPTARTS in Section \ref{Sec:Mot} and is compared to some alternative designs. Finally the paper will conclude with a discussion.  
\section{Design}
\label{Sec:Design}
Consider a clinical trial with $K$ experimental arms being tested against one common control arm. Each active treatment is tested at up to $J$ analyses, with there being an equal number of analyses as treatments. Therefore the maximum number of analyses, $J$, equals $K$. Let $n_{k,j}$ denote the number of patients recruited to treatment $k \in \{0,\hdots,K\}$ by the end of stage $j\in \{1,\hdots,J\}$. The null hypotheses of interest are $H_1: \delta_1 \leq 0, H_2:  \delta_2 \leq 0, \hdots, H_K:  \delta_K \leq 0,$ where $\delta_k$ is the difference in treatment effect between treatment $k$ and the control treatment (treatment 0). We denote the set of all $\delta_k$ as $\Delta$, so $\Delta= (\delta_1, \hdots, \delta_K)$. At each analysis $j$ for treatment $k$, $H_k$ is tested using the test statistic
\begin{equation*}
Z_{k,j} = \frac{\bar{\delta}_{k,j}}{\sqrt{V_{k,j}}},
\end{equation*}
where $\bar{\delta}_{k,j}$ is the difference in treatment effects of the observed patients on that given treatment $k$ and the control treatment, up to stage $j$ and $V_{k,j}$ is the variance of the observed difference in treatment effects. It is assumed that $Z_{k,j}$ follows a normal distribution $Z_{k,j} \sim (\frac{\delta_{k,j}}{\sqrt{V_{k,j}}},1)$.
\par
At each analysis $j \leq J-1$ the treatment with the smallest test statistic is dropped. Therefore treatment $k'$ is dropped at stage $j$ if $Z_{k',j}\leq Z_{k,j}$ for all $k \in \{1,\hdots K\}/\{m_1,$ $m_2,\hdots,m_{j-1}\}$, where $m_{i}$ is the treatment dropped at stage $i$. For the remaining treatments if $Z_{k,j} \geq u_{k,j}$ for all treatments left then the trial stops for superiority, where $u_{k,j}$ is the predefined boundary for treatment $k\in \{1,\hdots,K\}$ at stage $j\in \{1,\hdots,J\}$. 
One can use the approach defined in this paper even if one wants to drop multiple treatments at each interim. In this case one simply sets $n_{k,j}= n_{k,j'}$ and $u_{k,j}= u_{k,j'}$  for any interims that one wants to drop multiple treatments, similarly seen in \cite{wason2017multi}. 

\subsection{PWER}
The pairwise error rate (PWER) is the probability of recommending an ineffective treatment $k^\star$ by the end of the study regardless of other experimental arms in the trial \citep{bratton2016type}. The PWER is $1-P(\bigcap_{j=1}^{J}  B_{k^\star,j})$ with $B_{k,j}$ being the event $Z_{k,j} \leq u_{k,j}$ for all $k \in 1,\hdots, K$. Therefore to control the PWER at a given level $\alpha$ one needs to find $u_{k,1},\hdots,u_{k,J}$ so that $1-P(\bigcap_{j=1}^{J}  B_{k^\star,j}) \leq \alpha$. When one has equal sample size for each active treatment then the PWER is equal for every arm given they have the same boundaries, therefore one just needs to calculate $u_{1},\hdots,u_{J}$ where $u_j=u_{k,j}$ for all $k \in 1,\hdots K$. This can be calculated using the multivariate normal distribution function similar to \cite{MagirrD.2012AgDt,GreenstreetPeter2021Ammp, greenstreet2025multi}. In the Supporting Information Section 1 the multivariate normal distribution equations used to calculate the PWER is given for the motivating example. 
\par
If one controls the PWER this ensures that the type I error for a given treatment $k^\star$ is controlled no matter what the treatment effect of the other treatments are, as shown in Theorem \ref{theorem:PWER}. 
\begin{theorem}
The type I error for a given treatment $k^\star$ is guaranteed to be controlled at level $\alpha$ if $1-P(\bigcap_{j=1}^{J}  B_{k^\star,j})\leq \alpha$.
\label{theorem:PWER}
\end{theorem}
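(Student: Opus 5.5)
The idea is to show that the event ``treatment $k^\star$ is recommended'' is always contained in the event $\bigcup_{j=1}^{J} B_{k^\star,j}^c$, whatever the other arms do. The probability of that event depends only on the marginal law of $(Z_{k^\star,1},\hdots,Z_{k^\star,J})$. Hence the bound $1-P(\bigcap_{j=1}^{J} B_{k^\star,j})\leq \alpha$ will transfer to the true type I error for every configuration of $\delta_k$, $k\neq k^\star$.

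\textbf{Step 1: event inclusion.} In the design, treatment $k^\star$ can only be recommended at some stage $j$ at which it is still in the trial and $Z_{k^\star,j}\geq u_{k^\star,j}$. This holds whether the trial stops early because all remaining arms cross their boundaries, or $k^\star$ is the last arm tested at stage $J$. Both dropping and early stopping can only remove opportunities for $k^\star$ to be declared superior; they never create new ones. Therefore
\[
\{k^\star \text{ recommended}\} \subseteq \bigcup_{j=1}^{J}\{Z_{k^\star,j}> u_{k^\star,j}\} = \Big(\bigcap_{j=1}^{J} B_{k^\star,j}\Big)^c .
\]
Here $Z_{k^\star,j}$ is interpreted as the statistic that would be computed had $k^\star$ continued with its planned $n_{k^\star,j}$. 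This potential-outcome coupling is legitimate because the planned sample sizes are fixed in advance, not data dependent, so $Z_{k^\star,j}$ is well defined on the full probability space.

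\textbf{Step 2: the bound does not depend on the other arms.} The vector $(Z_{k^\star,1},\hdots,Z_{k^\star,J})$ is built only from the data on arm $k^\star$ and the control. Its joint normal distribution has means $\delta_{k^\star}\sqrt{I_{k^\star,j}}$ and a covariance fixed by the planned sample sizes, so it is free of $\delta_k$ for $k\neq k^\star$. The right-hand side probability is therefore a function of $\delta_{k^\star}$ alone.

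\textbf{Step 3: worst case under the null.} Under $H_{k^\star}$ we have $\delta_{k^\star}\leq 0$. The means of the $Z_{k^\star,j}$ are then at most their values at $\delta_{k^\star}=0$. By stochastic monotonicity of the multivariate normal in its mean, with fixed covariance so that shifting all means down increases $P(\bigcap_j B_{k^\star,j})$, the probability of the union is maximised at $\delta_{k^\star}=0$. Combining the three steps gives $P(k^\star\text{ recommended})\leq 1-P(\bigcap_j B_{k^\star,j})\leq\alpha$.

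The main obstacle I expect is Step 1, specifically making rigorous that $Z_{k^\star,j}$ is defined at stages after $k^\star$ is dropped. Selection, that is being the smallest statistic, correlates the stopping with the other arms through the shared control. I would handle this by working on the full unselected data vector and noting that the inclusion is pathwise, so no conditioning on selection is needed.
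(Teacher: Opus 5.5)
Your proof is correct and rests on the same key step as the paper's. The paper fixes a dropping order $m_1,\ldots,m_K$, writes $R$ for the event that $k^\star$ is never declared superior, and uses $B_{k^\star,j}\subseteq D_{k^\star,j}$ to obtain $R\supseteq\bigcap_{j=1}^{J}B_{k^\star,j}$, which is exactly the complement of your Step 1 inclusion. Your Steps 2 and 3 make explicit what the paper leaves implicit: that the law of $(Z_{k^\star,1},\ldots,Z_{k^\star,J})$ is free of $\delta_k$ for $k\neq k^\star$, and that $\delta_{k^\star}<0$ only shifts this vector downward. Together these carry the set inclusion over to every configuration in $H_{k^\star}$, and your pathwise argument is cleaner than the paper's informal treatment of a fixed dropping order.
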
 
\textcolor{black}{The proof for Theorem 1 is given in the Supporting Information Section 2. The underlying principles of the proof are that if $1-P(\bigcap_{j=1}^{J}  Z_{k^\star,j} \leq u_{k^\star,j})\leq \alpha$ then the PWER is controlled assuming no dropping of any treatment arms, but, includes accounting for the ability to stop the trial early for superiority. By additionally accounting for the possibility that treatment $k^\star$ may be dropped at an intermediate stage, the type I error for treatment $k^\star$ is further reduced. Furthermore, incorporating the fact that the arm can only stop early if all other treatments are found to be superior to the control further decreases the type I error for treatment $k^\star$. Overall, controlling the PWER therefore results in overly conservative control of the type I error for each treatment. However, this allows practical flexibility. For example,  ignoring the requirement that all treatments must be found superior to the control means that, if needed, other treatments in the trial can be stopped earlier (e.g., for safety concerns).} 
\par
\textcolor{black}{It is worth noting that we have defined a type I error for a given treatment as occurring when, at the point the trial ends, its test statistic exceeds the superiority boundary, regardless of whether that treatment is the best-performing option overall. This definition is used because, even if the treatment is not the best-performing, we still wish to avoid drawing the conclusion that it is superior to the control when it is not. Requiring that the treatment also be the best-performing option would further reduce the type I error. Originally, we require $Z_{k^\star,j} > u_j$ and $Z_{k,j} > u_j$ for all remaining treatments $k$. If we additionally require that treatment $k^\star$ be the best for a type I error to occur, then a type I error can occur only if $Z_{k^\star,j} > u_j$, $Z_{k,j} > u_j$ for all remaining treatments $k$, and $Z_{k^\star,j} > Z_{k,j}$ for all remaining treatments $k$, which is a subset of the original event. Consequently, any procedure that controls the pairwise error rate (PWER) under the original definition of type I error will also control it under this more restrictive definition that requires $k^\star$ to be the best-performing treatment.
}

\subsection{Power under LFC}
In this trial design the trial can stop early at each of the $j$ stages for superiority. This will happen when all the remaining treatments are found superior to the control. As recommended in the literature \citep{MagirrD.2012AgDt, wason2017multi, DunnettCharlesW1955AMCP, GreenstreetPeter2021Ammp} when this happens one wants to ensure that if there is a treatment that is superior to the other active treatments, it is found to be the best performing of the treatments and is therefore the one that is recommended. 
We assume that any given treatment $k'$, is recommended when (i) its test statistic and the test statistics for all other treatments left at that given stage cross the corresponding upper boundary, and (ii) its test statistic is the largest one of the remaining treatments. The sample size is found such that the probability of rejecting $H_{k'}$ achieves power $1-\beta$ when $\delta_{k'} = \theta'$ and $\delta_k = \theta_0$ for $k \neq k'$ where $\theta'$ is the minimum clinically interesting treatment effect and $\theta_0$ is the highest uninteresting treatment effect. This setting is known as the least favorable configuration (LFC). When calculating the power under the LFC one will likely assume equal sample size for each active treatment \citep{MagirrD.2012AgDt, wason2017multi} which is assumed for the remainder of this section, in this case the power under the LFC is the same for every treatment.
\par
The calculation of power under the LFC can be broken down into the event that the treatment of interest is recommended at each stage $j \in 1, \hdots, J$, defined by $\Phi_j$. Without loss of generality assume that treatment 1 is the treatment of interest.  The event that treatment 1 is found to be the best treatment and the trial stops for superiority at a given stage $j$, where $j<J$, can be split into 5 events. The first event is that treatment 1 is found superior to the control at the given stage, $A_{1,j}$, where $A_{k,j}: Z_{k,j}>u_{k,j}$. The second event is that the test statistic for treatment 1 is larger than the rest of the treatments being tested at that stage. This event equals 
\begin{equation*}
\bigcap_{k \in \{ 2,\hdots,K \}/ \{m_1,\hdots,m_{j-1}\}} C_{1,k,j}
\end{equation*}
where $m_1,..,m_{j-1}$ are the treatments that have already been dropped at earlier stages and $C_{k,k^\star,j}$ is the event that the test statistic for treatment $k$ is greater than the test statistic for treatment $k^\star$ at stage $j$, $C_{k,k^\star,j}: Z_{k,j}>Z_{k^\star,j}$. \textcolor{black}{It is worth noting that $\bigcap_{k \in \{ 2,\hdots,K \}/ \{m_1,\hdots,m_{j-1}\}}$ is the same as $\bigcap_{k \in \{ 2,\hdots,K \}}$ if $j=1$.}   
\par
The third event is all the test statistics of the other treatments being tested at the end of stage $j$ are greater than the test statistic of treatment $m_{j}$, and they are also found superior to the control treatment at stage $j$, 

\begin{equation*}
\bigcap_{k \in \{ 2,\hdots,K \}/ \{m_1,\hdots,m_j\}} \bigg{(}A_{k,j} \cap C_{k,m_j,j} \bigg{)}.
\end{equation*}
The fourth event is that treatment $m_i$ for $i=1,\hdots,j-1$ is dropped at stage $i$,
\begin{equation*}
\bigcap_{i \in \{ 1,\hdots,j-1 \}} \bigg{(} \bigcap_{k \in \{ 1,\hdots,K \}/ \{m_1,\hdots,m_i\}} C_{k,m_i,i} \bigg{)}.
\end{equation*}
The fifth event is that the trial did not stop for superiority at an earlier stage,
\begin{equation*}
\bigcap_{i \in \{ 1,\hdots,j-1 \}} \bigg{(} \bigcup_{k \in \{ 1,\hdots,K \}/ \{m_1,\hdots,m_i\}} B_{k,i} \bigg{)}.
\end{equation*}
Using these 5 events and every possible $m_1,\hdots m_j$ one can calculate $\Phi_j$. Therefore the event that treatment 1 is declared superior to the control at a given stage $j$, given $j<J$, is 
\begin{align*} 
\Phi_j= & \bigcup_{\substack{m_i \in \{2,\hdots,K\} / \{m_1,\hdots, m_{i-1} \} \\ i=1,\hdots, j }} \Bigg{[}  A_{1,j} \cap \bigcap_{k \in \{ 2,\hdots,K \}/ \{m_1,\hdots,m_{j-1}\}} C_{1,k,j} \cap  \\ & \bigcap_{k \in \{ 2,\hdots,K \}/ \{m_1,\hdots,m_j\}} \bigg{(}A_{k,j} \cap C_{k,m_j,j} \bigg{)}  \cap \bigcap_{i \in \{ 1,\hdots,j-1 \}} \bigg{(} \bigcap_{k \in \{ 1,\hdots,K \}/ \{m_1,\hdots,m_i\}} C_{k,m_i,i} \bigg{)} \\ & \cap \bigcap_{i \in \{ 1,\hdots,j-1 \}} \bigg{(} \bigcup_{k \in \{ 1,\hdots,K \}/ \{m_1,\hdots,m_i\}} B_{k,i} \bigg{)} \Bigg{]}. 
\end{align*}
\color{black}
It is worth noting that $$\bigcup_{\substack{m_i \in \{2,\hdots,K\} / \{m_1,\hdots, m_{i-1} \} \\ i=1,\hdots, j }}$$ represents the union taken over all possible permutations of $m_1,\hdots, m_{i-1}$.
\color{black}
For treatment 1 to be found superior at the final stage it must be the last treatment being tested. This therefore simplifies the calculation, so, the event that treatment 1 is declared superior to the control at stage $J$ is
 
\begin{align*}
\Phi_J= & \bigcup_{\substack{m_i \in \{2,\hdots,K\} / \{m_1,\hdots, m_{i-1} \} \\ i=1,\hdots, J-1 }} \Bigg{[}  A_{1,J} \cap \bigcap_{i \in \{ 1,\hdots,J-1 \}} \bigg{(} \bigcap_{k \in \{ 1,\hdots,K \}/ \{m_1,\hdots,m_i\}} C_{k,m_i,i} \bigg{)} \\ & \cap \bigcap_{i\in \{ 1,\hdots,J-1 \}} \bigg{(} \bigcup_{k \in \{ 1,\hdots,K \}/ \{m_1,\hdots,m_i\}} B_{k,i} \bigg{)} \Bigg{]}. 
\end{align*}
As $\Phi_1, \hdots, \Phi_J$ are mutually exclusive events, the power under the LFC is
\begin{equation*}
\sum^{J}_{j=1} P(\Phi_j),
\end{equation*}
with $P(\Phi_j)$ being calculated using the multivariate normal distribution. Supporting Information Section 3 gives the multivariate normal equations used for the motivating example. One then calculates the $n_j$ and $n_{0,j}$ to ensure the power under LFC is $\geq 1-\beta$, where $n_j=n_{k,j}$ for all $k \in 1,\hdots, K$. The maximum sample size can then be found to be,
$\max(N)=\sum^{J}_{j=1} n_j + n_{0,J}$, where $N$ is the sample size of the trial.

\subsection{Expected sample size}
\label{subsec:ESS}
The expected sample size can be calculated in a similar way to the power. First by calculating the probability of the event that the trials stops at each given stage.
The event that the trial stops at each stage $j$ is defined as  $\Psi_j$. Given $j<J$, $\Psi_j$ equals
\begin{align*} 
\Psi_j= & \bigcup_{\substack{k_i \in \{1,\hdots,K\} / \{k_1,\hdots, k_{i-1} \} \\ i=1,\hdots, j }} \Bigg{[} \bigcap_{k \in \{ 1,\hdots,K \}/ \{k_1,\hdots,k_j\}} \bigg{(}A_{k,j} \cap C_{k,k_j,j} \bigg{)} \cap \bigcap_{i \in \{ 1,\hdots,j-1 \}} \bigg{(} \\ &\bigcap_{k \in \{ 1,\hdots,K \}/ \{k_1,\hdots,k_i\}} C_{k,k_i,i} \bigg{)} \cap \bigcap_{i \in \{ 1,\hdots,j-1 \}} \bigg{(} \bigcap_{k \in \{ 1,\hdots,K \}/ \{k_1,\hdots,k_i\}} B_{k,i} \bigg{)} \Bigg{]}.
\end{align*}
The event that the trial stops at stage $J$ is
\begin{align*} 
\Psi_J=& \bigcup_{\substack{k_i \in \{1,\hdots,K\} / \{k_1,\hdots, k_{i-1} \} \\ i=1,\hdots, J }} \Bigg{[} \bigcap_{i \in \{ 1,\hdots,j-1 \}} \bigg{(} \bigcap_{k \in \{ 1,\hdots,K \}/ \{k_1,\hdots,k_i\}} C_{k,k_i,i} \bigg{)}
\\ &
\cap \bigcap_{i \in \{ 1,\hdots,j-1 \}} \bigg{(} \bigcap_{k \in \{ 1,\hdots,K \}/ \{k_1,\hdots,k_i\}} B_{k,i} \bigg{)} \Bigg{]}.
\end{align*}
To calculate the expected sample size one needs to include the sample size required for each event $P(\Psi_j)$. The expected sample size is therefore,
\begin{equation*}
E(N|\Delta)=
\sum^{J}_{j=1} \bigg{(} P(\Psi_j)  \bigg{(} \sum^{j-1}_{i=1} n_i + (K-j+1)n_j +n_{0,j} \bigg{)} \bigg{)}.
\end{equation*}
Once again $P(\Psi_j)$ can be calculated using the multivariate normal distribution, with the Supporting Information Section 4 giving the equations used for the motivating example. 
\color{black}
\subsection{Pre-specified design parameters and implementation}
\label{subsec:design parameters}
In order to use this design, one must pre-specify several key parameters based on medical expertise. One must specify the desired level of control of the PWER, $\alpha$, and the desired power under the LFC, $1-\beta$. Additionally, the number of treatments of interest should be prespecified.
\par
To calculate the power, one must also specify the minimal clinically relevant effect, the highest uninteresting effect, and the variability of the outcome. These three parameters should be determined through discussion with clinicians and by reviewing the literature. One can use the approaches outlined in \cite{whitehead2009one} to obtain the normal approximations for these parameters, which can then be used in the calculation of the stopping boundaries and the power.

Overall, to implement the approach discussed above with the defined parameters, one must first calculate the type I error for the given boundaries. This can be done using an iterative approach in which the PWER is calculated for the current boundaries, and if PWER is greater than $\alpha$, the boundaries are increased; if PWER is less than $\alpha$, the boundaries are decreased. This process is repeated until the boundaries control the PWER at a level between $\alpha$ and $\alpha - \omega$, where $\omega$ is a predefined tolerance. For example, $\omega$ was set to 0.00001 in the motivating example. Therefore, the bounds were selected to achieve PWER within the range $0.02499$ to $0.025$.

Once the boundaries are calculated, the required sample size per stage can then be determined. This can be done by calculating the power for a small sample size, then increasing the sample size one patient per stage until the study achieves at least $1-\beta$ power. Using this approach both the sample size and the boundaries are found. One can then calculate the expected sample size under any configuration of interest using the method set out in subsection \ref{subsec:ESS}.

\color{black}
\section{Motivating example}
\label{Sec:Mot}

This design was motivated by a trial in Post Operative Preventative Therapy for AtRial fibrillation after Thoracic Surgery (POPTARTS). Post-operative atrial fibrillation (POAF) is the most common arrhythmia encountered after thoracic surgery, with a prevalence of 10-15\% after lobectomy, 20-30\% after pneumonectomy, and 12-37\% after esophagectomy \citep{vaporciyan2004risk, seesing2019new}. In the short term the development of POAF is associated with worsened hemodynamic instability, increased risk of thromboembolic events, and prolonged intensive care unit and hospital stay \citep{ivanovic2014incidence}. In the long-term, development of POAF confers a three to four-fold increased risk of stroke, a four-fold increase in myocardial infarction, and a three-fold increase in mortality \citep{albini2021long, alturki2020major}. POAF is associated with significantly increased health care costs and resource utilization, adding over ten thousand dollars to the cost of care when patients develop the arrhythmia \citep{lapar2014postoperative}. Despite some evidence of efficacy of pharmacological prevention of POAF, as well as recommendations for use of prophylaxis in thoracic surgery guidelines, recent surveys have demonstrated poor uptake of these interventions by thoracic surgeons. In a 2014 survey by the American Association of Thoracic Surgeons 56\% of respondents did not routinely give any medication for the prevention of POAF \citep{frendl20142014}. Similarly, in a recent, yet to be published, survey of Canadian thoracic surgeons, 67\% of respondents stated they did not routinely prescribe pharmacologic prophylaxis for POAF. Therefore this motivated the desire to study the available pharmacologic treatments to see if it can be shown that any are superior to the current standard of care of no treatment. Due to this being a publicly funded, academic investigator led trial with a limited fixed maximum budget a multi-stage drop-the-loser design was recommended. However due to the severity of the POAF the ability to stop early for superiority was also desired. 
\par
There are three active treatments of interest, carvedilol, magnesium sulphate, amiodarone and the control treatment is the current standard of care of no treatments.   
One treatment will be dropped at each equally spaced stage, so there are an equal number of patients on each treatment at each stage. \textcolor{black}{A clinically relevant difference is categorized as a 5\% absolute decrease in atrial fibrillation - so a risk difference of interest of 5\% - and an uninteresting treatment effect being less than 1\% absolute decrease. The incidence of POAF after major thoracic surgery is approximately 12\%. Therefore a treatment with a clinically relevant effect would have incidence of POAF of less than 7\%.} The power of the trial was set to be 90\% and PWER controlled at 2.5\% one-sided. \textcolor{black}{Therefore using the approach discussed in \cite{whitehead2009one,JakiT2013Coca} the normal approximation for the binary endpoint can be used. This gives a clinically relevant effect of $\theta'=0.594$ based on the log-odds ratio and an uninteresting treatment effect of $\theta_0=0.098$ based on the log-odds ratio and an approximation for the variance of the observed difference in treatment effect of $V_{k,j}=\sigma^2(n_{k,j}^{-1}+n_{0,j}^{-1}) $ where $\sigma^2=9.47$ \citep{whitehead2009one, JakiT2013Coca, jaki2019r}.} The O’Brien and Fleming boundaries are used as they preserve a nominal significance level at the final analysis that is close to that of a single test procedure 
\citep{OBrienPeterC.1979AMTP, chen2014flexible}.
\par
The calculations were carried out using R \citep{Rref} with the method given here having the multivariate normal probabilities being calculated using the package \texttt{mvtnorm} \citep{mvtnorm}. The code is available in the Supporting Information.

\subsection{Alternative approaches}
The approach proposed in this paper was compared to five alternative designs. The first alternative design is to use a multi-stage drop-the-loser design,  without early termination for superiority. The second is to run multiple single stage trials with each one comparing an active treatment to the control. Building on this we will also consider the alternative design of running separate trials but now with 3 stages for each trial. At each stage the treatments will be tested for superiority using the O'Brien and Fleming boundaries. We will consider two versions of this approach, the first is using symmetric futility boundaries and the second is using futility boundaries set to zero as recommended in \cite{MagirrD.2012AgDt}. \textcolor{black}{Each individual trial controls power at 90\% and controls the PWER at 2.5\%. Because the trials are independent, it is possible for multiple trials to conclude that different active treatments are superior to the control. If, instead, one were to require control of the LFC across all trials simultaneously, the sample sizes shown in Table \ref{tab:Comparisondesignsall} would increase. In that scenario, the design would need to demonstrate not only that the treatment with the clinically relevant effect is superior to the control, but also that it outperforms all other treatments. However, such cross-trial calibration is not realistic, as power is typically controlled within each individual study rather than across multiple independent studies. For this reason, Section \ref{subsec:results} considers designs in which each trial is powered independently at the 90\% level.} 
\par
The next alternative design will be a multi-arm design. The final alternative design will be a multi-arm multi-stage (MAMS) design where the trial can stop early for superiority using the O'Brien and Fleming boundaries. This will follow the design given in \cite{MagirrD.2012AgDt} now with PWER control, so the trial will stop once a treatment is found superior to the control and the best performing treatment will be recommended. We will consider two versions of this MAMS design, the first using symmetric futility boundaries and the second using futility boundaries set to zero as recommended in \cite{MagirrD.2012AgDt}. \textcolor{black}{For the MAMS designs we will control the power under the LFC at 90\% and control the PWER at 2.5\%.} 

\subsection{Results}
\label{subsec:results}
The O'Brien and Fleming stopping boundaries  required to control the PWER at the 2.5\% significance level are: 
\begin{equation}
    u_1=3.47, \;\;\; u_2=2.45, \;\;\; u_3=2.00,
    \label{EQ:Bounds}
\end{equation}
\textcolor{black}{ Table \ref{tab:Comparisondesignsall} reports the power under the LFC, the type I error of treatment $k^\star$ under the global null, and the PWER. The PWER is equal to the type I error of treatment $k^\star$ when its true effect is equal to that of the control and all the other active treatments are dropped from the trial before the first interim. For example, the other active treatments may be dropped early if they exhibit a strongly negative effect and therefore raise safety concerns. The equations used to calculate the type I error of treatment $k^\star$ under the global null are given in the Supporting Information Section 5. Additionally, Supporting Information Section 6 gives simulation results of the type I error for a given treatment under multiple different scenarios.}
\par
Table \ref{tab:Comparisondesignsall} shows the operating characteristics of the proposed design.  The maximum sample size is given, $\max(N)$.  The expected sample size is studied under 4 configurations. The first is all the active treatment effects are the same as the control treatment which is known as the global null hypothesis, so $\delta_k=0$ for all $k \in \{1,2,3 \}$, which we denote as $\Delta_0=(0,0,0)$. The second is $\delta_1=\theta'$ and $\delta_2=\delta_3=\theta_0$ therefore under the configuration studied when looking at the power under the LFC, which we denote as $\Delta_1=(\theta',\theta_0,\theta_0)$. The final configuration is every active treatment has a clinically relevant effect, so $\delta_k=\theta'$ for all the $k \in \{1,2,3 \}$, which we denote as $\Delta_2=(\theta',\theta',\theta')$. 
\par
As can be seen in Table \ref{tab:Comparisondesignsall} the maximum sample size of the proposed approach is 1854 patients which corresponds to 206 patients being recruited per remaining arm per stage. The expected sample size  under the null configuration is 1846.5 which is very close to the maximum sample size, as the trial will only stop if all the remaining treatments are found superior. Therefore, under the null most of the time the trial will continue to the final stage. Under the LFC however the sample size now drops by over 250 patients. This is because now there is a increased probability, 62.5\%, that the trial stops early. This reduction in expected sample size is even greater when all the treatments have a clinically relevant effect, as now there is a 83.7\% chance that the trial will stop early.

\par

The alternative design of a multi-stage drop-the-loser design with out the ability to stop early for superiority reduces the maximum sample size by 27 patients compared to the proposed approach, as shown in Table \ref{tab:Comparisondesignsall}. There is also a decrease in expected sample size of 19.5 patients when under the global null, however this design comes with a significant increase in expected sample size under the LFC, with an increase of 231.0 patients, or when all the treatments have a clinically relevant effect, with an increase of 342.3 patients.
\par

Table \ref{tab:Comparisondesignsall} also shows the operating characteristics of the other alternative designs. For the multi-arm design there is an increase in both maximum sample size and in expected sample size compared to the proposed approach. The maximum sample size increases by 422 patients. Under the LFC the expected sample size increases by 680.0. The MAMS approach with symmetric boundaries also results in a further increase in maximum sample size of 498 patients. Additionally, the expected sample sizes studied are greater than that of the proposed approach. The MAMS approach with futility boundaries of zero, has the smallest expected sample size under the null, with a decease of 237.9 compared to the proposed design. The expected sample size under both the LFC and when all the treatments have a clinically relevant effect is greater than the proposed approach, with an increase of 7 and 107.5 patients, respectively. The MAMS approach with futility boundaries of zero has an increased maximum sample size of 582 patients compared to the proposed approach.
\par
For all the separate trials designs the sample size is increased further. This is because now there is the need to recruit a different control group for each comparison.  The maximum sample size increases by 1530 patients for the single stage separate trials design compared to the proposed approach. Under the LFC the expected sample size increases by 1788.0 patients. 
Compared to the proposed approach, the maximum and expected sample sizes under the LFC increased by 1,584 and 1,577.5 patients for the multi-stage separate trials with symmetric futility boundaries, and by 1,728 and 1,002.3 patients for the design with zero futility boundaries, respectively.


\begin{table*}[!t]
\color{black}
\centering
 \caption{Operating characteristics of the proposed multi-stage drop-the-loser design along with the operating characteristics of the alternative approaches.}

\resizebox{\textwidth}{!}{\begin{tabular}{c|c|c|c|c|c|c|c}
Design & Power & Type 1 error* & PWER & $\max(N)$  & $E(N|\Delta_0)$ & $E(N|\Delta_1)$  & $E(N|\Delta_2)$   \\
\hline
Multi-stage superiority drop-the-loser & 0.901 & 0.019 & 0.025 &  1854 & 1846.5 & 1596.0 & 1484.7 \\
Multi-stage drop-the-loser & 0.901 & 0.018 & 0.025 &  1827 & 1827 & 1827 & 1827 \\
Multi-arm  & 0.900 & 0.025 & 0.025 &  2276 & 2276 & 2276 & 2276 \\
MAMS symmetric futility & 0.901 & 0.022 & 0.025 &  2352 & 2332.0 & 1832.7 & 1591.4 \\
MAMS zero futility & 0.900 & 0.022 & 0.025 &  2436 & 1608.6 & 1603.0 & 1592.2 \\
Separate trials & 0.900 & 0.025 & 0.025 &  3384 & 3384 & 3384 & 3384 \\
Multi-stage separate trials symmetric futility &  0.900 & 0.025 & 0.025 & 3438 & 3421.0 & 3173.5 & 2701.9 \\
Multi-stage separate trials zero futility & 0.901 & 0.025 & 0.025 & 3582 & 2229.3 & 2598.3 & 2703.0 \\
\end{tabular}}
\par
*Type 1 error under the global null for a given treatment.
\label{tab:Comparisondesignsall} 
\color{black}
\end{table*}

\section{Discussion}
Overall this paper has presented a multi-stage drop-the-loser design which allows for early stopping for superiority which was motivated from a trial in atrial fibrillation. In Section \ref{Sec:Design} generalized methodology was presented to control both the PWER and power under the LFC of the study, along with equations needed to analytically calculate the expected sample size. Section \ref{Sec:Mot} then compared the proposed approach to the alternative designs considered in the development of the trial. This section demonstrates that the proposed approach requires fewer patients than comparable multi-arm or separate trial designs. For the motivating example the early stopping for superiority does result in an increase in the maximum sample size compared to doing a drop-the-loser design with no early stopping. This increase is 27 patients, compared to a potential saving of 342.3 patients if all the treatments have a clinically relevant effect when using the superiority boundaries compared to not. There was also a potential increase in expected sample size of the proposed design compared to the MAMS design with futility boundaries equal to zero. This increase was 237.9 under the null configuration, however the maximum sample size decrease by 582 patients by using the proposed approach.
\par
\textcolor{black}{Throughout this work, the focus has been on controlling the PWER, as this trial is designed to evaluate distinct treatments, therefore, this type I error control is appropriate \citep{HowardDenaR2021Apti,HowardDenaR2018Romt,MolloySleF.2022Maip, parker2020non, cook1996multiplicity}. However, there are scenarios where multiplicity adjustments, such as family-wise error rate (FWER) control or false discovery rate (FDR) control, may be necessary \citep{WasonJamesMS2014Cfmi}. This could include cases where multiple doses of the same treatment are tested or where regulatory guidelines demand such control. The proposed approach could be applied in this setting by using a Bonferroni correction \citep{bonferroni1936teoria,dunn1961multiple}, although this would be overly conservative. Consequently, future work could focus on extending the current methodology to control these errors exactly. Additional research could also evaluate whether the drop-the-loser design can be integrated with an all-pairwise design, thereby removing the need for a control treatment \citep{WhiteheadJohn2020Eote,greenstreet2025multi}.}







\par
This paper introduces a framework for a drop-the-loser design with superiority boundaries, centered around normally distributed test statistics. Using the methodology proposed by \cite{whitehead2009one, JakiT2013Coca} this approach can accommodate other endpoints, including binary endpoints which was used for our motivating example. When applying this methodology, it is important to be aware of potential computational difficulties in the evaluation of high-dimensional multivariate normal distributions when considering trials with many arms. In such cases, one may consider the techniques proposed in \cite{BlondellLucy2021Game} for managing high-dimensional multivariate normal computations, or alternatively, use a simulation-based approach. A simulation-based approach also has a further benefit when using non-normally distributed endpoints as it can help remove the bias caused from the normality assumptions and the normality approximations used in calculating the variance and covariance matrix \citep{whitehead2009one, JakiT2013Coca, jaki2019r}. \textcolor{black}{As done in \cite{wason2017multi} in this work when calculating the covariance matrix it has been assumed that $V_{k,j}=\sigma^2(n_{k,j}^{-1}+n_{0,j}^{-1})$ where $n_{k,j}=n_{k^\star,j}$ for all $k,k^\star$. Therefore an area for further work is calculating the multivariate normal equations for when $\sigma_k \neq \sigma_{k^\star}$  where $\sigma_k^2$ is the variance of treatment $k$ and $n_{k,j} \neq n_{k^\star,j}$.} \textcolor{black}{It is worth noting that the boundaries for PWER control can be calculated using pre-existing R packages such as \textit{gsDesign} \citep{gs} and \textit{rpact} \citep{rpact}}.
\par
\textcolor{black}{To use such a design, one must have multiple active treatments of interest, a common control arm, and the same primary outcome for each treatment. To implement the design, several key parameters must be prespecified, as detailed in Subsection \ref{subsec:design parameters}. It is also worth noting, as with other drop-the-loser designs, that if all the treatments have a clinically relevant effect, there is a high probability that one of these effective treatments may be dropped from the study before reaching the superiority boundaries. Therefore, if the aim is to identify all clinically relevant treatments, one should consider whether a MAMS design would be more appropriate.}
\par
\textcolor{black}{Overall, the proposed design incorporates some of the advantages of both MAMS and drop-the-loser approaches. It retains the MAMS benefit of allowing early stopping for superiority, which can lead to substantial reductions in expected sample size. At the same time, the drop-the-loser component helps limit the maximum sample size required by systematically eliminating less promising arms during the trial.}

\section*{Author contributions}
Peter Greenstreet derived the equations and wrote the paper with the support of Manel Khan. Salmaan Kanji, Motacedian Pouya, Andrew Seely and Stephanie Sibley were key to the design of this trial with their clinical expertise, and reviewed the manuscript. Tim Ramsay oversaw the entire project and contributed greatly to the manuscript development.

\section*{Acknowledgments}
P.G. wishes to acknowledge CAN TAP TALENT for its role in supporting the completion of this manuscript. The CAN TAP TALENT CTTP is funded by the Canadian Institutes of Health Research (CIHR) – Grant \#184898.  P.G. was also supported by a CANSTAT trainee award funded by CIHR grant \#262556.

\bibliographystyle{apalike}
\bibliography{refSI} 
\section{Supporting Information}
\subsection{Equations for the proposed approach to calculate the PWER for the motivating example}
For the motivating example the PWER equals

\begin{align*}
1-\int^{u_1}_{-\infty} \int^{u_2}_{-\infty} \int^{u_3}_{-\infty}  \phi\bigg{(}\mathbf{z} ,\mu',\Sigma' \bigg{)} \textbf{d}\mathbf{z},
\end{align*}
where $\phi(\mathbf{z},\mu,\Sigma)$ is the probability density function of a multi-variate normal distribution with mean $\mu$ and covariance matrix $\Sigma$, with 
\begin{align*}
\mu'= (0,0,0)
\end{align*}
and
\begin{equation*}
\Sigma'=\begin{pmatrix}
1  & \sqrt{\frac{1}{2}} & \sqrt{\frac{1}{3}}
\\
\sqrt{\frac{1}{2}}  & 1 & \sqrt{\frac{2}{3}} 
\\
\sqrt{\frac{1}{3}}  & \sqrt{\frac{2}{3}} & 1 
\end{pmatrix}.
\end{equation*}

\color{black}
\subsection{Proof of Theorem 1}
\begin{proof}
 Let $D_{k^\star,j}$ define the event that treatment $k^\star$ is not found to be better than the control at the given stage $j$,
\begin{equation*}
D_{k^\star,j}=\left\{\begin{matrix}
Z_{k^\star,j} \leq u_{k,j} & \text{if } k^\star \text{ is still in the trial}  \\
\Omega  & \text{if } k^\star \text{ is not in the trial}\\
\end{matrix}\right. . 
\end{equation*}
where $\Omega$ is the whole sample space.
Let $m_1,\hdots m_K$ be the ordering in which the treatments are dropped from the trial. With $m_K$ being the last treatment dropped. We define $m_1 \in \{ 1,\hdots, K\}$ and define $m_k \in \{ 1,\hdots, K\}/\{m_1, \hdots, m_{k-1}\}$.
For any given $m_1,\hdots m_K$ the event that treatment $k^\star$ is not dropped from the trial equals: 
\begin{equation}
R=\bigcap_{j=1}^{J-1} \bigg{(}D_{k^\star,j} \cup \bigcup_{k=\{j+1,\hdots,J \}} B_{{m_k},j} \bigg{)} \cap \bigg{(}D_{k^\star,J} \cup B_{{m_K},J} \bigg{)}.
\label{equ:proof}
\end{equation}

R in Equation \ref{equ:proof} can be broken down into 2 key parts. The first part is  $D_{k^\star,j} \cup \bigcup_{k=\{j+1,\hdots,J \}} B_{{m_k},j} $. This states that at each stage (up-to stage J-1) the trial does not stop for superiority. For stage J this is given by the equation $D_{k^\star,J} \cup B_{{m_K},J} $.
The second part is the intercept of all these events is calculated. This is because for treatment $k^\star$ not to be found superior it must not be stopped for superiority at any stage of the trial.
It is worth noting that if $B_{{m_K},J}=B_{{k^\star},J}$ then this corresponds to the scenario in which treatment $k^\star$ would be the last treatment dropped from the trial. Furthermore $k=\{j+1,\hdots,J \}$ in $\bigcup_{k=\{j+1,\hdots,J \}} B_{{m_k},j}$ as the decision on if the trial stops for superiority at stages j<J is based on the treatment effects of the treatments left, after one of the treatments has been dropped at that stage. However the proof would still hold if one instead required all the treatments at a given stage to be superior to the control including the one that is dropped as $\bigcup_{k=\{j+1,\hdots,J \}} B_{{m_k},j} \subseteq \bigcup_{k=\{j,\hdots,J \}} B_{{m_k},j}$.
\par
Therefore the type I error of treatment $k^\star$ for given $m_1,\hdots m_K$ equals $1-P(R)$ 
As $ B_{k^\star,j} \subseteq D_{k^\star,j}$, 
\begin{equation*}
R \supseteq \bigcap_{j=1}^{J-1} \bigg{(}B_{k^\star,j} \cup \bigcup_{k=\{j+1,\hdots,J \}} B_{{m_k},j} \bigg{)} \cap \bigg{(}B_{k^\star,J} \cup B_{{m_K},J} \bigg{)} \supseteq \bigcap_{j=1}^{J-1} B_{k^\star,j} \cap B_{k^\star,J},
\end{equation*}
so
\begin{equation*}
P(R) \geq P(\bigcap_{j=1}^{J-1} B_{k^\star,j} \cap B_{k^\star,J}), 
\end{equation*}  
therefore if  $1-P(\bigcap_{j=1}^{J} B_{k^\star,j}) \leq \alpha$ then $1-P(R) \leq \alpha$. 
\end{proof}
\color{black}
\subsection{Equations for the proposed approach to calculate the Power under the LFC for the motivating example}

To calculate the power under the LFC we calculate $P(\Phi_1)$, $P(\Phi_2)$ and $P(\Phi_3)$.  As the trial has an equal number of patients per stage per arm $n$ is used as $n=n_{1,k}=n_{2,k}-n_{1,k}=n_{3,k}-n_{2,k}$ for all $k \in \{0,1,2,3\}$. $P(\Phi_1)$ equals
\begin{align*}
P(\Phi_1)& = 2 \bigg{[} \int^{\infty}_{u_1} \int^{\infty}_{0} \int^{\infty}_{0} \int^{\infty}_{u_1} \int^{\infty}_{0} \phi\bigg{(}\mathbf{z} ,\mu^{\Phi_1},\Sigma^{\Phi_1}\bigg{)} \textbf{d}\mathbf{z} \bigg{]},
\end{align*}

where
\begin{align*}
\mu^{\Phi_1}=\bigg{(} \frac{\theta'\sqrt{n}} {\sigma\sqrt{2}}, 
\frac{(\theta'-\theta_0) \sqrt{n}} {\sigma\sqrt{2}},
\frac{(\theta'-\theta_0) \sqrt{n}} {\sigma\sqrt{2}},
\frac{\theta_0 \sqrt{n}} {\sigma\sqrt{2}},0 \bigg{)},
\end{align*}
and
\begin{equation*}
\Sigma^{\Phi_1}=\begin{pmatrix}
1  & \frac{1}{2} & \frac{1}{2} & \frac{1}{2}  & 0
\\
\frac{1}{2}  & 1 & \frac{1}{2} & -\frac{1}{2}  & -\frac{1}{2}
\\
\frac{1}{2}  & \frac{1}{2} & 1 & 0 & \frac{1}{2}
\\
\frac{1}{2}  & -\frac{1}{2} & 0 & 1 & \frac{1}{2} 
\\
0  & -\frac{1}{2} & \frac{1}{2} & \frac{1}{2} & 1 
\end{pmatrix}.
\end{equation*}
\textcolor{black}{It is worth noting that for the events $C_{k,k^\star,j}$ one can use the fact this event equals $Z_{k,j}-Z_{k^\star,j}>0$, to create a well-defined space to integrate over. For $P(\Phi_1$) we are calculating the integrals across: $Z_{1,1}, Z_{1,1}-Z_{2,1}, Z_{1,1}-Z_{3,1}, Z_{2,1},Z_{2,1}-Z_{3,1}$. Therefore to calculate $\Sigma^{\Phi_1}$ one needs to find the covariance between: $Z_{1,1} \text{ and } Z_{1,1};\text{ } Z_{1,1} \text{ and } Z_{1,1}-Z_{2,1};\text{ } Z_{1,1} \text{ and } Z_{1,1}-Z_{3,1};\text{ }\hdots;\text{ } Z_{2,1} \text{ and } Z_{2,1}-Z_{3,1};\text{ } Z_{2,1}-Z_{3,1} \text{ and } Z_{2,1}-Z_{3,1}$. Subsection \ref{subsec:cov} of the Supporting Information provides generalized equations to calculate the elements of the covariance matrix when $V_{k,j}$ can be written in the form $V_{k,j} = \sigma^2 (n_{k,j} + n_{0,j})$, where $n_{k,j} = n_{k^\star,j}$ for all $k, k^\star$, as done for the motivating example. Furthermore, this formulation takes advantage of the fact that under the LFC both active treatments 2 and 3 have the same effect of interest. Thus, rather than calculating the above expression with treatment 2 being dropped first instead of treatment 3, one can simply double the integral.} 

$P(\Phi_2)$ equals
\begin{align*}
P(\Phi_2)& = 2 \bigg{[} \int^{\infty}_{u_2} \int^{\infty}_{0} \int^{\infty}_{0} \int^{\infty}_{0} \int^{u_1}_{-\infty} \int^{u_1}_{-\infty} \phi\bigg{(}\mathbf{z} ,\mu^{\Phi_2},\Sigma^{\Phi_2}\bigg{)} \textbf{d}\mathbf{z} 
\\
&  + \int^{\infty}_{u_2} \int^{\infty}_{0} \int^{\infty}_{0} \int^{\infty}_{0} \int^{\infty}_{u_1} \int^{u_1}_{-\infty} \phi\bigg{(}\mathbf{z} ,\mu^{\Phi_2},\Sigma^{\Phi_2}\bigg{)} \textbf{d}\mathbf{z} 
\\
&  + \int^{\infty}_{u_2} \int^{\infty}_{0} \int^{\infty}_{0} \int^{\infty}_{0} \int^{u_1}_{-\infty} \int^{\infty}_{u_1} \phi\bigg{(}\mathbf{z} ,\mu^{\Phi_2},\Sigma^{\Phi_2}\bigg{)} \textbf{d}\mathbf{z} \bigg{]}, 
\end{align*}

where
\begin{align*}
\mu^{\Phi_2}=\bigg{(} \frac{\theta'\sqrt{n}} {\sigma}, 
\frac{(\theta'-\theta_0) \sqrt{n}} {\sigma},
\frac{(\theta'-\theta_0) \sqrt{n}} {\sigma\sqrt{2}},0,
\frac{\theta' \sqrt{n}} {\sigma\sqrt{2}},\frac{\theta_0 \sqrt{n}} {\sigma\sqrt{2}} \bigg{)},
\end{align*}
and
\begin{equation*}
\Sigma^{\Phi_2}=\begin{pmatrix}
1  & \frac{1}{2} & \frac{1}{2\sqrt{2}} & 0 &  \frac{1}{\sqrt{2}}  & \frac{1}{2\sqrt{2}}
\\
\frac{1}{2}  & 1 & \frac{1}{2\sqrt{2}} & -\frac{1}{2\sqrt{2}} & \frac{1}{2\sqrt{2}}  & -\frac{1}{2\sqrt{2}}
\\
\frac{1}{2\sqrt{2}}  & \frac{1}{2\sqrt{2}} & 1 & \frac{1}{2} & \frac{1}{2} & 0
\\
0 & -\frac{1}{2\sqrt{2}} & \frac{1}{2}  & 1 & 0 &\frac{1}{2} 
\\
\frac{1}{\sqrt{2}} & \frac{1}{2\sqrt{2}}  & \frac{1}{2} & 0 & 1 & \frac{1}{2} 
\\
\frac{1}{2\sqrt{2}} & -\frac{1}{2\sqrt{2}}  & 0 &\frac{1}{2} & \frac{1}{2} & 1
\end{pmatrix}.
\end{equation*}
\textcolor{black}{For $P(\Phi_2$) we calculated the integrals across: $Z_{1,2}, Z_{1,2}-Z_{2,2}, Z_{1,1}-Z_{3,1}, Z_{2,1}-Z_{3,1}, Z_{1,1},Z_{2,1}$.} 
$P(\Phi_3)$ equals
\begin{align*}
P(\Phi_3)& = 2 \bigg{[} \int^{\infty}_{u_3} \int^{\infty}_{0} \int^{\infty}_{0} \int^{\infty}_{0} \int^{u_1}_{-\infty} \int^{u_1}_{-\infty} \int^{u_2}_{-\infty}  \phi\bigg{(}\mathbf{z} ,\mu^{\Phi_3},\Sigma^{\Phi_3}\bigg{)} \textbf{d}\mathbf{z}  
\\
&  + \int^{\infty}_{u_3} \int^{\infty}_{0} \int^{\infty}_{0} \int^{\infty}_{0} \int^{\infty}_{u_1} \int^{u_1}_{-\infty} \int^{u_2}_{-\infty}  \phi\bigg{(}\mathbf{z} ,\mu^{\Phi_3},\Sigma^{\Phi_3}\bigg{)} \textbf{d}\mathbf{z}
\\
&  + \int^{\infty}_{u_3} \int^{\infty}_{0} \int^{\infty}_{0} \int^{\infty}_{0} \int^{u_1}_{-\infty} \int^{\infty}_{u_1} \int^{u_2}_{-\infty} \phi\bigg{(}\mathbf{z} ,\mu^{\Phi_3},\Sigma^{\Phi_3}\bigg{)} \textbf{d}\mathbf{z} \bigg{]}, 
\end{align*}

where
\begin{align*}
\mu^{\Phi_3}=\bigg{(} \frac{\theta'\sqrt{3n}} {\sigma\sqrt{2}}, \frac{(\theta'-\theta_0) \sqrt{n}} {\sigma\sqrt{2}},0,
\frac{(\theta'-\theta_0) \sqrt{n}} {\sigma},
\frac{\theta' \sqrt{n}} {\sigma\sqrt{2}},\frac{\theta_0 \sqrt{n}} {\sigma\sqrt{2}} , \frac{\theta' \sqrt{n}} {\sigma} \bigg{)},
\end{align*}
and
\begin{equation*}
\Sigma^{\Phi_3}=\begin{pmatrix}
1  & \frac{1}{2\sqrt{3}} & 0 & \frac{1}{\sqrt{6}} & \frac{1}{\sqrt{3}} & \frac{1}{2\sqrt{3}} &  \frac{\sqrt{2}}{\sqrt{3}}
\\
\frac{1}{2\sqrt{3}} & 1 & \frac{1}{2} & \frac{1}{2\sqrt{2}} &  \frac{1}{2} & 0  & \frac{1}{2\sqrt{2}}
\\
0 & \frac{1}{2} & 1 & -\frac{1}{2\sqrt{2}}  & 0 & \frac{1}{2} & 0
\\
\frac{1}{\sqrt{6}} & \frac{1}{2\sqrt{2}} & -\frac{1}{2\sqrt{2}} & 1 & \frac{1}{2\sqrt{2}}   & -\frac{1}{2\sqrt{2}}  &\frac{1}{2} 
\\
\frac{1}{\sqrt{3}} &\frac{1}{2} & 0 & \frac{1}{2\sqrt{2}} & 1 & \frac{1}{2} & \frac{1}{\sqrt{2}} 
\\
\frac{1}{2\sqrt{3}} &0 & \frac{1}{2}  & -\frac{1}{2\sqrt{2}} &\frac{1}{2} & 1 & \frac{1}{2\sqrt{2}}
\\
\frac{\sqrt{2}}{\sqrt{3}} & \frac{1}{2\sqrt{2}} & 0 &  \frac{1}{2} & \frac{1}{\sqrt{2}} & \frac{1}{2\sqrt{2}} & 1
\end{pmatrix}.
\end{equation*}
\textcolor{black}{For $P(\Phi_3$) we calculated the integrals across: $Z_{1,3}, Z_{1,1}-Z_{3,1}, Z_{2,1}-Z_{3,1}, Z_{1,2}-Z_{2,2}, Z_{1,1},Z_{2,1},Z_{1,2}$.} 
The power under the LFC for the motivating example is therefore 
\begin{equation*}
\sum^{3}_{j=1} P(\Phi_j).
\end{equation*}
\color{black}
\subsubsection{General equation for covariance matrix}
\label{subsec:cov}
Under the same assumptions as used in \cite{wason2017multi} of $V_{k,j}=\sigma^2(n_{k,j}^{-1}+n_{0,j}^{-1})$ where $n_{k,j}=n_{k^\star,j}$ for all $k,k^\star$ the covariance between the events $B_{k,j}$ and $B_{k^\star,j^\star}$; or $B_{k,j}$ and $A_{k^\star,j^\star}$; or $A_{k,j}$ and $B_{k^\star,j^\star}$; or  $A_{k,j}$ and $A_{k^\star,j^\star}$ equals: 

\begin{equation*}
\text{Cov}(Z_{k,j},Z_{k^\star,j^\star})=\left\{\begin{matrix}
\sqrt{\frac{\min(n_j,n_{j^\star})}{\max(n_j,n_{j^\star})}} & \text{if } k^\star= k \\
\frac{1}{2}\sqrt{\frac{\min(n_j,n_{j^\star})}{\max(n_j,n_{j^\star})}}  & \text{if } k^\star \neq k \\
\end{matrix}\right. . 
\end{equation*}

The covariance between the events $C_{k_1,k_2,j}$ and $C_{k^\star_1,k^\star_2,j^\star}$ equals: 
\begin{equation*}
\text{Cov}(Z_{k_1,j}-Z_{k_2,j},Z_{k^\star_1,j^\star}-Z_{k^\star_2,j^\star})=\left\{\begin{matrix}
\sqrt{\frac{\min(n_j,n_{j^\star})}{\max(n_j,n_{j^\star})}} & \text{if } k_1^\star= k_1 \;\& \; k_2^\star= k_2  \\
-\sqrt{\frac{\min(n_j,n_{j^\star})}{\max(n_j,n_{j^\star})}} & \text{if } k_2^\star= k_1 \;\& \; k_1^\star= k_2 \\
\frac{1}{2}\sqrt{\frac{\min(n_j,n_{j^\star})}{\max(n_j,n_{j^\star})}}  & \text{if } k_1^\star= k_1 \;\& \; k_2^\star \neq k_2 \text{ or }   k_1^\star \neq k_1 \;\& \; k_2^\star= k_2\\
-\frac{1}{2}\sqrt{\frac{\min(n_j,n_{j^\star})}{\max(n_j,n_{j^\star})}}  & \text{if } k_1^\star= k_2 \;\& \; k_2^\star \neq k_1 \text{ or }   k_1^\star \neq k_2 \;\& \; k_2^\star= k_1\\
0  & \text{if } k_1^\star \neq k_1 \;\& \; k_1^\star \neq k_2 \;\& \; k_2^\star \neq k_1 \;\& \; k_2^\star \neq k_2\\
\end{matrix}\right. . 
\end{equation*}
It is worth noting that by design $k_1 \neq k_2$ and $k^\star_1 \neq k^\star_2$. 
The covariance between the events $A_{k_1,j}$ and $C_{k^\star_1,k^\star_2,j^\star}$; or $B_{k_1,j}$ and $C_{k^\star_1,k^\star_2,j^\star}$ equals:
\begin{equation*}
\text{Cov}(Z_{k_1,j},Z_{k^\star_1,j^\star}-Z_{k^\star_2,j^\star})=\left\{\begin{matrix}
\frac{1}{2}\sqrt{\frac{\min(n_j,n_{j^\star})}{\max(n_j,n_{j^\star})}}  & \text{if } k_1 = k_1^\star  \\
-\frac{1}{2}\sqrt{\frac{\min(n_j,n_{j^\star})}{\max(n_j,n_{j^\star})}}  & \text{if }  k_1 = k_2^\star\\
0  & \text{if } k_1 \neq k_1^\star \;\& \; k_1 \neq k_2^\star\\
\end{matrix}\right. . 
\end{equation*}
\color{black}
\subsection{Equations for the proposed approach to calculate the expected sample size for the motivating example}

To calculate the expected sample size for the given $\Delta$ we calculate $P(\Psi_1)$, $P(\Psi_2)$ and $P(\Psi_3)$.  $P(\Psi_1)$ equals

\begin{align*}
P(\Psi_1)& = \int^{\infty}_{u_1} \int^{\infty}_{0}  \int^{\infty}_{u_1} \int^{\infty}_{0} \phi\bigg{(}\mathbf{z} ,\mu^{\Psi_{1, a}},\Sigma^{\Psi_1}\bigg{)} \textbf{d}\mathbf{z} + \int^{\infty}_{u_1} \int^{\infty}_{0}  \int^{\infty}_{u_1} \int^{\infty}_{0} \phi\bigg{(}\mathbf{z} ,\mu^{\Psi_{1, b}},\Sigma^{\Psi_1}\bigg{)} \textbf{d}\mathbf{z} 
\\
&+\int^{\infty}_{u_1} \int^{\infty}_{0}  \int^{\infty}_{u_1} \int^{\infty}_{0} \phi\bigg{(}\mathbf{z} ,\mu^{\Psi_{1, c}},\Sigma^{\Psi_1}\bigg{)} \textbf{d}\mathbf{z},
\end{align*}

where
\begin{align*}
\mu^{\Psi_{1,a}}=\bigg{(} \frac{\delta_{i_1} \sqrt{n}} {\sigma\sqrt{2}}, 
\frac{(\delta_{i_1}-\delta_{i_3}) \sqrt{n}} {\sigma\sqrt{2}},
\frac{\delta_{i_2} \sqrt{n}} {\sigma\sqrt{2}},
\frac{(\delta_{i_2}-\delta_{i_3}) \sqrt{n}} {\sigma\sqrt{2}} \bigg{)},
\end{align*}
with $i_1=1,i_2=2,i_3=3$;
\begin{align*}
\mu^{\Psi_{1,b}}=\bigg{(} \frac{\delta_{i_1} \sqrt{n}} {\sigma\sqrt{2}}, 
\frac{(\delta_{i_1}-\delta_{i_3}) \sqrt{n}} {\sigma\sqrt{2}},
\frac{\delta_{i_2} \sqrt{n}} {\sigma\sqrt{2}},
\frac{(\delta_{i_2}-\delta_{i_3}) \sqrt{n}} {\sigma\sqrt{2}} \bigg{)},
\end{align*}
with $i_1=1,i_2=3,i_3=2$;
\begin{align*}
\mu^{\Psi_{1,c}}=\bigg{(} \frac{\delta_{i_1} \sqrt{n}} {\sigma\sqrt{2}}, 
\frac{(\delta_{i_1}-\delta_{i_3}) \sqrt{n}} {\sigma\sqrt{2}},
\frac{\delta_{i_2} \sqrt{n}} {\sigma\sqrt{2}},
\frac{(\delta_{i_2}-\delta_{i_3}) \sqrt{n}} {\sigma\sqrt{2}} \bigg{)},
\end{align*}
with $i_1=2,i_2=3,i_3=1$ and
\begin{equation*}
\Sigma^{\Psi_1}=\begin{pmatrix}
1  & \frac{1}{2} & \frac{1}{2}  & 0
\\
\frac{1}{2}  & 1 & 0 & \frac{1}{2}
\\
\frac{1}{2}  & 0 & 1 & \frac{1}{2}
\\
0 & \frac{1}{2} & \frac{1}{2} & 1
\end{pmatrix}.
\end{equation*}

$P(\Psi_2)$ equals
\begin{align*}
P(\Psi_2)& =  \int^{\infty}_{u_2} \int^{\infty}_{0} \int^{\infty}_{0} \int^{\infty}_{0} \int^{u_1}_{-\infty} \int^{u_1}_{-\infty} \phi\bigg{(}\mathbf{z} ,\mu^{\Psi_{2,a}},\Sigma^{\Psi_2}\bigg{)} \textbf{d}\mathbf{z}  
\\
& + \int^{\infty}_{u_2} \int^{\infty}_{0} \int^{\infty}_{0} \int^{\infty}_{0} \int^{\infty}_{u_1} \int^{u_1}_{-\infty} \phi\bigg{(}\mathbf{z} ,\mu^{\Psi_{2,a}},\Sigma^{\Psi_2}\bigg{)} \textbf{d}\mathbf{z} 
\\
& + \int^{\infty}_{u_2} \int^{\infty}_{0} \int^{\infty}_{0} \int^{\infty}_{0} \int^{u_1}_{-\infty} \int^{\infty}_{u_1} \phi\bigg{(}\mathbf{z} ,\mu^{\Psi_{2,a}},\Sigma^{\Psi_2}\bigg{)} \textbf{d}\mathbf{z} 
\\
& + \int^{\infty}_{u_2} \int^{\infty}_{0} \int^{\infty}_{0} \int^{\infty}_{0} \int^{u_1}_{-\infty} \int^{u_1}_{-\infty} \phi\bigg{(}\mathbf{z} ,\mu^{\Psi_{2,b}},\Sigma^{\Psi_2}\bigg{)} \textbf{d}\mathbf{z}  
\\
& + \int^{\infty}_{u_2} \int^{\infty}_{0} \int^{\infty}_{0} \int^{\infty}_{0} \int^{\infty}_{u_1} \int^{u_1}_{-\infty} \phi\bigg{(}\mathbf{z} ,\mu^{\Psi_{2,b}},\Sigma^{\Psi_2}\bigg{)} \textbf{d}\mathbf{z} 
\\
& + \int^{\infty}_{u_2} \int^{\infty}_{0} \int^{\infty}_{0} \int^{\infty}_{0} \int^{u_1}_{-\infty} \int^{\infty}_{u_1} \phi\bigg{(}\mathbf{z} ,\mu^{\Psi_{2,b}},\Sigma^{\Psi_2}\bigg{)} \textbf{d}\mathbf{z} 
\\
& + \int^{\infty}_{u_2} \int^{\infty}_{0} \int^{\infty}_{0} \int^{\infty}_{0} \int^{u_1}_{-\infty} \int^{u_1}_{-\infty} \phi\bigg{(}\mathbf{z} ,\mu^{\Psi_{2,c}},\Sigma^{\Psi_2}\bigg{)} \textbf{d}\mathbf{z}  
\\
& + \int^{\infty}_{u_2} \int^{\infty}_{0} \int^{\infty}_{0} \int^{\infty}_{0} \int^{\infty}_{u_1} \int^{u_1}_{-\infty} \phi\bigg{(}\mathbf{z} ,\mu^{\Psi_{2,c}},\Sigma^{\Psi_2}\bigg{)} \textbf{d}\mathbf{z} 
\\
& + \int^{\infty}_{u_2} \int^{\infty}_{0} \int^{\infty}_{0} \int^{\infty}_{0} \int^{u_1}_{-\infty} \int^{\infty}_{u_1} \phi\bigg{(}\mathbf{z} ,\mu^{\Psi_{2,c}},\Sigma^{\Psi_2}\bigg{)} \textbf{d}\mathbf{z} 
\\
& + \int^{\infty}_{u_2} \int^{\infty}_{0} \int^{\infty}_{0} \int^{\infty}_{0} \int^{u_1}_{-\infty} \int^{u_1}_{-\infty} \phi\bigg{(}\mathbf{z} ,\mu^{\Psi_{2,d}},\Sigma^{\Psi_2}\bigg{)} \textbf{d}\mathbf{z}  
\\
& + \int^{\infty}_{u_2} \int^{\infty}_{0} \int^{\infty}_{0} \int^{\infty}_{0} \int^{\infty}_{u_1} \int^{u_1}_{-\infty} \phi\bigg{(}\mathbf{z} ,\mu^{\Psi_{2,d}},\Sigma^{\Psi_2}\bigg{)} \textbf{d}\mathbf{z} 
\\
& + \int^{\infty}_{u_2} \int^{\infty}_{0} \int^{\infty}_{0} \int^{\infty}_{0} \int^{u_1}_{-\infty} \int^{\infty}_{u_1} \phi\bigg{(}\mathbf{z} ,\mu^{\Psi_{2,d}},\Sigma^{\Psi_2}\bigg{)} \textbf{d}\mathbf{z} 
\\
& + \int^{\infty}_{u_2} \int^{\infty}_{0} \int^{\infty}_{0} \int^{\infty}_{0} \int^{u_1}_{-\infty} \int^{u_1}_{-\infty} \phi\bigg{(}\mathbf{z} ,\mu^{\Psi_{2,e}},\Sigma^{\Psi_2}\bigg{)} \textbf{d}\mathbf{z}  
\\
& + \int^{\infty}_{u_2} \int^{\infty}_{0} \int^{\infty}_{0} \int^{\infty}_{0} \int^{\infty}_{u_1} \int^{u_1}_{-\infty} \phi\bigg{(}\mathbf{z} ,\mu^{\Psi_{2,e}},\Sigma^{\Psi_2}\bigg{)} \textbf{d}\mathbf{z} 
\\
& + \int^{\infty}_{u_2} \int^{\infty}_{0} \int^{\infty}_{0} \int^{\infty}_{0} \int^{u_1}_{-\infty} \int^{\infty}_{u_1} \phi\bigg{(}\mathbf{z} ,\mu^{\Psi_{2,e}},\Sigma^{\Psi_2}\bigg{)} \textbf{d}\mathbf{z} 
\\
& + \int^{\infty}_{u_2} \int^{\infty}_{0} \int^{\infty}_{0} \int^{\infty}_{0} \int^{u_1}_{-\infty} \int^{u_1}_{-\infty} \phi\bigg{(}\mathbf{z} ,\mu^{\Psi_{2,f}},\Sigma^{\Psi_2}\bigg{)} \textbf{d}\mathbf{z}  
\\
& + \int^{\infty}_{u_2} \int^{\infty}_{0} \int^{\infty}_{0} \int^{\infty}_{0} \int^{\infty}_{u_1} \int^{u_1}_{-\infty} \phi\bigg{(}\mathbf{z} ,\mu^{\Psi_{2,f}},\Sigma^{\Psi_2}\bigg{)} \textbf{d}\mathbf{z} 
\\
& + \int^{\infty}_{u_2} \int^{\infty}_{0} \int^{\infty}_{0} \int^{\infty}_{0} \int^{u_1}_{-\infty} \int^{\infty}_{u_1} \phi\bigg{(}\mathbf{z} ,\mu^{\Psi_{2,f}},\Sigma^{\Psi_2}\bigg{)} \textbf{d}\mathbf{z},
\end{align*}

where
\begin{align*}
\mu^{\Psi_{2,a}}=\bigg{(} \frac{\delta_{i_1}\sqrt{n}} {\sigma}, 
\frac{(\delta_{i_1}-\delta_{i_2}) \sqrt{n}} {\sigma},
\frac{(\delta_{i_1}-\delta_{i_3}) \sqrt{n}} {\sigma\sqrt{2}},\frac{(\delta_{i_2}-\delta_{i_3}) \sqrt{n}} {\sigma\sqrt{2}},
\frac{\delta_{i_1} \sqrt{n}} {\sigma\sqrt{2}},\frac{\delta_{i_2} \sqrt{n}} {\sigma\sqrt{2}} \bigg{)},
\end{align*}
with $i_1=1,i_2=2,i_3=3$;

\begin{align*}
\mu^{\Psi_{2,b}}=\bigg{(} \frac{\delta_{i_1}\sqrt{n}} {\sigma}, 
\frac{(\delta_{i_1}-\delta_{i_2}) \sqrt{n}} {\sigma},
\frac{(\delta_{i_1}-\delta_{i_3}) \sqrt{n}} {\sigma\sqrt{2}},\frac{(\delta_{i_2}-\delta_{i_3}) \sqrt{n}} {\sigma\sqrt{2}},
\frac{\delta_{i_1} \sqrt{n}} {\sigma\sqrt{2}},\frac{\delta_{i_2} \sqrt{n}} {\sigma\sqrt{2}} \bigg{)},
\end{align*}
with $i_1=1,i_2=3,i_3=2$;

\begin{align*}
\mu^{\Psi_{2,c}}=\bigg{(} \frac{\delta_{i_1}\sqrt{n}} {\sigma}, 
\frac{(\delta_{i_1}-\delta_{i_2}) \sqrt{n}} {\sigma},
\frac{(\delta_{i_1}-\delta_{i_3}) \sqrt{n}} {\sigma\sqrt{2}},\frac{(\delta_{i_2}-\delta_{i_3}) \sqrt{n}} {\sigma\sqrt{2}},
\frac{\delta_{i_1} \sqrt{n}} {\sigma\sqrt{2}},\frac{\delta_{i_2} \sqrt{n}} {\sigma\sqrt{2}} \bigg{)},
\end{align*}
with $i_1=2,i_2=3,i_3=1$;

\begin{align*}
\mu^{\Psi_{2,d}}=\bigg{(} \frac{\delta_{i_1}\sqrt{n}} {\sigma}, 
\frac{(\delta_{i_1}-\delta_{i_2}) \sqrt{n}} {\sigma},
\frac{(\delta_{i_1}-\delta_{i_3}) \sqrt{n}} {\sigma\sqrt{2}},\frac{(\delta_{i_2}-\delta_{i_3}) \sqrt{n}} {\sigma\sqrt{2}},
\frac{\delta_{i_1} \sqrt{n}} {\sigma\sqrt{2}},\frac{\delta_{i_2} \sqrt{n}} {\sigma\sqrt{2}} \bigg{)},
\end{align*}
with $i_1=3,i_2=2,i_3=1$;

\begin{align*}
\mu^{\Psi_{2,e}}=\bigg{(} \frac{\delta_{i_1}\sqrt{n}} {\sigma}, 
\frac{(\delta_{i_1}-\delta_{i_2}) \sqrt{n}} {\sigma},
\frac{(\delta_{i_1}-\delta_{i_3}) \sqrt{n}} {\sigma\sqrt{2}},\frac{(\delta_{i_2}-\delta_{i_3}) \sqrt{n}} {\sigma\sqrt{2}},
\frac{\delta_{i_1} \sqrt{n}} {\sigma\sqrt{2}},\frac{\delta_{i_2} \sqrt{n}} {\sigma\sqrt{2}} \bigg{)},
\end{align*}
with $i_1=3,i_2=1,i_3=2$;

\begin{align*}
\mu^{\Psi_{2,f}}=\bigg{(} \frac{\delta_{i_1}\sqrt{n}} {\sigma}, 
\frac{(\delta_{i_1}-\delta_{i_2}) \sqrt{n}} {\sigma},
\frac{(\delta_{i_1}-\delta_{i_3}) \sqrt{n}} {\sigma\sqrt{2}},\frac{(\delta_{i_2}-\delta_{i_3}) \sqrt{n}} {\sigma\sqrt{2}},
\frac{\delta_{i_1} \sqrt{n}} {\sigma\sqrt{2}},\frac{\delta_{i_2} \sqrt{n}} {\sigma\sqrt{2}} \bigg{)},
\end{align*}
with $i_1=2,i_2=1,i_3=3$ and
\begin{equation*}
\Sigma^{\Psi_2}=\begin{pmatrix}
1  & \frac{1}{2} & \frac{1}{2\sqrt{2}} & 0 &  \frac{1}{\sqrt{2}}  & \frac{1}{2\sqrt{2}}
\\
\frac{1}{2}  & 1 & \frac{1}{2\sqrt{2}} & -\frac{1}{2\sqrt{2}} & \frac{1}{2\sqrt{2}}  & -\frac{1}{2\sqrt{2}}
\\
\frac{1}{2\sqrt{2}}  & \frac{1}{2\sqrt{2}} & 1 & \frac{1}{2} & \frac{1}{2} & 0
\\
0 & -\frac{1}{2\sqrt{2}} & \frac{1}{2}  & 1 & 0 &\frac{1}{2} 
\\
\frac{1}{\sqrt{2}} & \frac{1}{2\sqrt{2}}  & \frac{1}{2} & 0 & 1 & \frac{1}{2} 
\\
\frac{1}{2\sqrt{2}} & -\frac{1}{2\sqrt{2}}  & 0 &\frac{1}{2} & \frac{1}{2} & 1
\end{pmatrix}.
\end{equation*}

$P(\Psi_3)$ equals
\begin{align*}
P(\Psi_3)& =  \int^{\infty}_{0} \int^{\infty}_{0} \int^{\infty}_{0} \int^{u_1}_{-\infty} \int^{u_1}_{-\infty} \int^{u_2}_{-\infty}  \phi\bigg{(}\mathbf{z} ,\mu^{\Psi_{3,a}},\Sigma^{\Psi_3}\bigg{)} \textbf{d}\mathbf{z}  
\\
& + \int^{\infty}_{0} \int^{\infty}_{0} \int^{\infty}_{0} \int^{\infty}_{u_1} \int^{u_1}_{-\infty} \int^{u_2}_{-\infty} \phi\bigg{(}\mathbf{z} ,\mu^{\Psi_{3,a}},\Sigma^{\Psi_3}\bigg{)} \textbf{d}\mathbf{z} 
\\
& +  \int^{\infty}_{0} \int^{\infty}_{0} \int^{\infty}_{0} \int^{u_1}_{-\infty} \int^{\infty}_{u_1} \int^{u_2}_{-\infty} \phi\bigg{(}\mathbf{z} ,\mu^{\Psi_{3,a}},\Sigma^{\Psi_3}\bigg{)} \textbf{d}\mathbf{z} 
\\
& +  \int^{\infty}_{0} \int^{\infty}_{0} \int^{\infty}_{0} \int^{u_1}_{-\infty} \int^{u_1}_{-\infty} \int^{u_2}_{-\infty}  \phi\bigg{(}\mathbf{z} ,\mu^{\Psi_{3,b}},\Sigma^{\Psi_3}\bigg{)} \textbf{d}\mathbf{z}  
\\
& + \int^{\infty}_{0} \int^{\infty}_{0} \int^{\infty}_{0} \int^{\infty}_{u_1} \int^{u_1}_{-\infty} \int^{u_2}_{-\infty} \phi\bigg{(}\mathbf{z} ,\mu^{\Psi_{3,b}},\Sigma^{\Psi_3}\bigg{)} \textbf{d}\mathbf{z} 
\\
& +  \int^{\infty}_{0} \int^{\infty}_{0} \int^{\infty}_{0} \int^{u_1}_{-\infty} \int^{\infty}_{u_1} \int^{u_2}_{-\infty} \phi\bigg{(}\mathbf{z} ,\mu^{\Psi_{3,b}},\Sigma^{\Psi_3}\bigg{)} \textbf{d}\mathbf{z} 
\\
& +  \int^{\infty}_{0} \int^{\infty}_{0} \int^{\infty}_{0} \int^{u_1}_{-\infty} \int^{u_1}_{-\infty} \int^{u_2}_{-\infty}  \phi\bigg{(}\mathbf{z} ,\mu^{\Psi_{3,c}},\Sigma^{\Psi_3}\bigg{)} \textbf{d}\mathbf{z}  
\\
& + \int^{\infty}_{0} \int^{\infty}_{0} \int^{\infty}_{0} \int^{\infty}_{u_1} \int^{u_1}_{-\infty} \int^{u_2}_{-\infty} \phi\bigg{(}\mathbf{z} ,\mu^{\Psi_{3,c}},\Sigma^{\Psi_3}\bigg{)} \textbf{d}\mathbf{z} 
\\
& +  \int^{\infty}_{0} \int^{\infty}_{0} \int^{\infty}_{0} \int^{u_1}_{-\infty} \int^{\infty}_{u_1} \int^{u_2}_{-\infty} \phi\bigg{(}\mathbf{z} ,\mu^{\Psi_{3,c}},\Sigma^{\Psi_3}\bigg{)} \textbf{d}\mathbf{z} 
\\
& +  \int^{\infty}_{0} \int^{\infty}_{0} \int^{\infty}_{0} \int^{u_1}_{-\infty} \int^{u_1}_{-\infty} \int^{u_2}_{-\infty}  \phi\bigg{(}\mathbf{z} ,\mu^{\Psi_{3,d}},\Sigma^{\Psi_3}\bigg{)} \textbf{d}\mathbf{z}  
\\
& + \int^{\infty}_{0} \int^{\infty}_{0} \int^{\infty}_{0} \int^{\infty}_{u_1} \int^{u_1}_{-\infty} \int^{u_2}_{-\infty} \phi\bigg{(}\mathbf{z} ,\mu^{\Psi_{3,d}},\Sigma^{\Psi_3}\bigg{)} \textbf{d}\mathbf{z} 
\\
& +  \int^{\infty}_{0} \int^{\infty}_{0} \int^{\infty}_{0} \int^{u_1}_{-\infty} \int^{\infty}_{u_1} \int^{u_2}_{-\infty} \phi\bigg{(}\mathbf{z} ,\mu^{\Psi_{3,d}},\Sigma^{\Psi_3}\bigg{)} \textbf{d}\mathbf{z} 
\\
& +  \int^{\infty}_{0} \int^{\infty}_{0} \int^{\infty}_{0} \int^{u_1}_{-\infty} \int^{u_1}_{-\infty} \int^{u_2}_{-\infty}  \phi\bigg{(}\mathbf{z} ,\mu^{\Psi_{3,e}},\Sigma^{\Psi_3}\bigg{)} \textbf{d}\mathbf{z}  
\\
& + \int^{\infty}_{0} \int^{\infty}_{0} \int^{\infty}_{0} \int^{\infty}_{u_1} \int^{u_1}_{-\infty} \int^{u_2}_{-\infty} \phi\bigg{(}\mathbf{z} ,\mu^{\Psi_{3,e}},\Sigma^{\Psi_3}\bigg{)} \textbf{d}\mathbf{z} 
\\
& +  \int^{\infty}_{0} \int^{\infty}_{0} \int^{\infty}_{0} \int^{u_1}_{-\infty} \int^{\infty}_{u_1} \int^{u_2}_{-\infty} \phi\bigg{(}\mathbf{z} ,\mu^{\Psi_{3,e}},\Sigma^{\Psi_3}\bigg{)} \textbf{d}\mathbf{z} 
\\
& +  \int^{\infty}_{0} \int^{\infty}_{0} \int^{\infty}_{0} \int^{u_1}_{-\infty} \int^{u_1}_{-\infty} \int^{u_2}_{-\infty}  \phi\bigg{(}\mathbf{z} ,\mu^{\Psi_{3,f}},\Sigma^{\Psi_3}\bigg{)} \textbf{d}\mathbf{z}  
\\
& + \int^{\infty}_{0} \int^{\infty}_{0} \int^{\infty}_{0} \int^{\infty}_{u_1} \int^{u_1}_{-\infty} \int^{u_2}_{-\infty} \phi\bigg{(}\mathbf{z} ,\mu^{\Psi_{3,f}},\Sigma^{\Psi_3}\bigg{)} \textbf{d}\mathbf{z} 
\\
& +  \int^{\infty}_{0} \int^{\infty}_{0} \int^{\infty}_{0} \int^{u_1}_{-\infty} \int^{\infty}_{u_1} \int^{u_2}_{-\infty} \phi\bigg{(}\mathbf{z} ,\mu^{\Psi_{3,f}},\Sigma^{\Psi_3}\bigg{)} \textbf{d}\mathbf{z},
\end{align*}

where
\begin{align*}
\mu^{\Psi_{3,a}}=\bigg{(}
\frac{(\delta_{i_1}-\delta_{i_3}) \sqrt{n}} {\sigma\sqrt{2}},\frac{(\delta_{i_2}-\delta_{i_3}) \sqrt{n}} {\sigma\sqrt{2}}, \frac{(\delta_{i_1}-\delta_{i_2}) \sqrt{n}} {\sigma},
\frac{\delta_{i_1} \sqrt{n}} {\sigma\sqrt{2}},\frac{\delta_{i_2} \sqrt{n}} {\sigma\sqrt{2}}, \frac{\delta_{i_1} \sqrt{n}} {\sigma} \bigg{)},
\end{align*}
with $i_1=1,i_2=2,i_3=3$;

\begin{align*}
\mu^{\Psi_{3,b}}=\bigg{(}
\frac{(\delta_{i_1}-\delta_{i_3}) \sqrt{n}} {\sigma\sqrt{2}},\frac{(\delta_{i_2}-\delta_{i_3}) \sqrt{n}} {\sigma\sqrt{2}}, \frac{(\delta_{i_1}-\delta_{i_2}) \sqrt{n}} {\sigma},
\frac{\delta_{i_1} \sqrt{n}} {\sigma\sqrt{2}},\frac{\delta_{i_2} \sqrt{n}} {\sigma\sqrt{2}}, \frac{\delta_{i_1} \sqrt{n}} {\sigma} \bigg{)},
\end{align*}
with $i_1=1,i_2=3,i_3=2$;

\begin{align*}
\mu^{\Psi_{3,c}}=\bigg{(}
\frac{(\delta_{i_1}-\delta_{i_3}) \sqrt{n}} {\sigma\sqrt{2}},\frac{(\delta_{i_2}-\delta_{i_3}) \sqrt{n}} {\sigma\sqrt{2}}, \frac{(\delta_{i_1}-\delta_{i_2}) \sqrt{n}} {\sigma},
\frac{\delta_{i_1} \sqrt{n}} {\sigma\sqrt{2}},\frac{\delta_{i_2} \sqrt{n}} {\sigma\sqrt{2}}, \frac{\delta_{i_1} \sqrt{n}} {\sigma} \bigg{)},
\end{align*}
with $i_1=2,i_2=3,i_3=1$;

\begin{align*}
\mu^{\Psi_{3,d}}=\bigg{(}
\frac{(\delta_{i_1}-\delta_{i_3}) \sqrt{n}} {\sigma\sqrt{2}},\frac{(\delta_{i_2}-\delta_{i_3}) \sqrt{n}} {\sigma\sqrt{2}}, \frac{(\delta_{i_1}-\delta_{i_2}) \sqrt{n}} {\sigma},
\frac{\delta_{i_1} \sqrt{n}} {\sigma\sqrt{2}},\frac{\delta_{i_2} \sqrt{n}} {\sigma\sqrt{2}}, \frac{\delta_{i_1} \sqrt{n}} {\sigma} \bigg{)},
\end{align*}
with $i_1=3,i_2=2,i_3=1$;

\begin{align*}
\mu^{\Psi_{3,e}}=\bigg{(}
\frac{(\delta_{i_1}-\delta_{i_3}) \sqrt{n}} {\sigma\sqrt{2}},\frac{(\delta_{i_2}-\delta_{i_3}) \sqrt{n}} {\sigma\sqrt{2}}, \frac{(\delta_{i_1}-\delta_{i_2}) \sqrt{n}} {\sigma},
\frac{\delta_{i_1} \sqrt{n}} {\sigma\sqrt{2}},\frac{\delta_{i_2} \sqrt{n}} {\sigma\sqrt{2}}, \frac{\delta_{i_1} \sqrt{n}} {\sigma} \bigg{)},
\end{align*}
with $i_1=3,i_2=1,i_3=2$;

\begin{align*}
\mu^{\Psi_{3,f}}=\bigg{(}
\frac{(\delta_{i_1}-\delta_{i_3}) \sqrt{n}} {\sigma\sqrt{2}},\frac{(\delta_{i_2}-\delta_{i_3}) \sqrt{n}} {\sigma\sqrt{2}}, \frac{(\delta_{i_1}-\delta_{i_2}) \sqrt{n}} {\sigma},
\frac{\delta_{i_1} \sqrt{n}} {\sigma\sqrt{2}},\frac{\delta_{i_2} \sqrt{n}} {\sigma\sqrt{2}}, \frac{\delta_{i_1} \sqrt{n}} {\sigma} \bigg{)},
\end{align*}
with $i_1=2,i_2=1,i_3=3$ and
\begin{equation*}
\Sigma^{\Psi_3}=\begin{pmatrix}
1  & \frac{1}{2} & \frac{1}{2\sqrt{2}} & \frac{1}{2} & 0  & \frac{1}{2\sqrt{2}}
\\
\frac{1}{2}  & 1 & -\frac{1}{2\sqrt{2}} & 0 & \frac{1}{2}  & 0
\\
\frac{1}{2\sqrt{2}}  & -\frac{1}{2\sqrt{2}} & 1 & \frac{1}{2\sqrt{2}} & -\frac{1}{2\sqrt{2}} &\frac{1}{2}
\\
\frac{1}{2} & 0 & \frac{1}{2\sqrt{2}}  & 1 & \frac{1}{2} &\frac{1}{\sqrt{2}} 
\\
0 & \frac{1}{2} & -\frac{1}{2\sqrt{2}} & \frac{1}{2} & 1 & \frac{1}{2\sqrt{2}} 
\\
\frac{1}{2\sqrt{2}} & 0  & \frac{1}{2} & \frac{1}{\sqrt{2}} & \frac{1}{2\sqrt{2}} & 1
\end{pmatrix}.
\end{equation*}

The expected sample size is therefore 
\begin{equation*}
E(N|\Delta)=
\sum^{3}_{j=1} \bigg{(} P(\Psi_j)  \bigg{(} \sum^{j-1}_{i=1} in + (K-j+2)jn  \bigg{)} \bigg{)}.
\end{equation*}

\color{black}
\subsection{Equations for the proposed approach to calculate the type I error under the global null}

To calculate the type I error under the global null we calculate the probability that, without loss of generality, treatment 1 is found superior to the control at a given stage $j$ and the trial stops at that given stage $j$. We define this event as $\nu_j$. This is therefore similar to $\Phi_1, \Phi_2, \Phi_3$. $P(\nu_1)$ equals
\begin{align*}
P(\nu_1)& = 2 \bigg{[} \int^{\infty}_{u_1} \int^{\infty}_{0} \int^{\infty}_{u_1} \int^{\infty}_{0} \phi\bigg{(}\mathbf{z} ,\mu^{\nu_{1,a}},\Sigma^{\nu_{1,a}}\bigg{)} \textbf{d}\mathbf{z} \bigg{]} + 
\int^{\infty}_{u_1} \int^{0}_{-\infty}\int^{0}_{-\infty}  \int^{\infty}_{u_1} \int^{\infty}_{u_1}  \phi\bigg{(}\mathbf{z} ,\mu^{\nu_{1,b}},\Sigma^{\nu_{1,b}}\bigg{)} \textbf{d}\mathbf{z} ,
\end{align*}

where
\begin{align*}
\mu^{\nu_{1,a}}=&\bigg{(} 0,0,0,0\bigg{)},
\\
\mu^{\nu_{1,b}}=&\bigg{(} 0,0,0,0,0\bigg{)},
\end{align*}
and
\begin{align*}
\Sigma^{\nu_{1,a}}=&\begin{pmatrix}
1   & \frac{1}{2} & \frac{1}{2}  & 0
\\
\frac{1}{2}  & 1 & 0 & \frac{1}{2}
\\
\frac{1}{2}  &  0 & 1 & \frac{1}{2} 
\\
0   & \frac{1}{2} & \frac{1}{2} & 1 
\end{pmatrix}.
\\
\Sigma^{\nu_{1,b}}=&\begin{pmatrix}
1   & \frac{1}{2} & \frac{1}{2}  & \frac{1}{2} &\frac{1}{2}
\\
\frac{1}{2}  & 1 & \frac{1}{2} & -\frac{1}{2} & 0
\\
\frac{1}{2}  &  \frac{1}{2} & 1 & 0& -\frac{1}{2} 
\\
\frac{1}{2}   & -\frac{1}{2} & 0 & 1 & \frac{1}{2}
\\
\frac{1}{2} & 0 & -\frac{1}{2}  & \frac{1}{2} & 1
\end{pmatrix}.
\end{align*}

$P(\nu_2)$ equals
\begin{align*}
P(\nu_2)& = 2 \bigg{[} \int^{\infty}_{u_2} \int^{\infty}_{0} \int^{\infty}_{0} \int^{u_1}_{-\infty} \int^{u_1}_{-\infty} \phi\bigg{(}\mathbf{z} ,\mu^{\nu_2},\Sigma^{\nu_2}\bigg{)} \textbf{d}\mathbf{z} 
\\
&  + \int^{\infty}_{u_2} \int^{\infty}_{0}  \int^{\infty}_{0} \int^{\infty}_{u_1} \int^{u_1}_{-\infty} \phi\bigg{(}\mathbf{z} ,\mu^{\nu_2},\Sigma^{\nu_2}\bigg{)} \textbf{d}\mathbf{z} 
\\
&  + \int^{\infty}_{u_2} \int^{\infty}_{0}  \int^{\infty}_{0} \int^{u_1}_{-\infty} \int^{\infty}_{u_1} \phi\bigg{(}\mathbf{z} ,\mu^{\nu_2},\Sigma^{\nu_2}\bigg{)} \textbf{d}\mathbf{z} \bigg{]}, 
\end{align*}

where
\begin{align*}
\mu^{\nu_2}=\bigg{(} 0,0,0,0,0 \bigg{)},
\end{align*}
and
\begin{equation*}
\Sigma^{\nu_2}=\begin{pmatrix}
1   & \frac{1}{2\sqrt{2}} & 0 &  \frac{1}{\sqrt{2}}  & \frac{1}{2\sqrt{2}}

\\
\frac{1}{2\sqrt{2}}   & 1 & \frac{1}{2} & \frac{1}{2} & 0
\\
0  & \frac{1}{2}  & 1 & 0 &\frac{1}{2} 
\\
\frac{1}{\sqrt{2}}   & \frac{1}{2} & 0 & 1 & \frac{1}{2} 
\\
\frac{1}{2\sqrt{2}}  & 0 &\frac{1}{2} & \frac{1}{2} & 1
\end{pmatrix}.
\end{equation*}
$P(\nu_3)$ equals
\begin{align*}
P(\nu_3)& = 2 \bigg{[} \int^{\infty}_{u_3} \int^{\infty}_{0} \int^{\infty}_{0} \int^{\infty}_{0} \int^{u_1}_{-\infty} \int^{u_1}_{-\infty} \int^{u_2}_{-\infty}  \phi\bigg{(}\mathbf{z} ,\mu^{\nu_3},\Sigma^{\nu_3}\bigg{)} \textbf{d}\mathbf{z}  
\\
&  + \int^{\infty}_{u_3} \int^{\infty}_{0} \int^{\infty}_{0} \int^{\infty}_{0} \int^{\infty}_{u_1} \int^{u_1}_{-\infty} \int^{u_2}_{-\infty}  \phi\bigg{(}\mathbf{z} ,\mu^{\nu_3},\Sigma^{\nu_3}\bigg{)} \textbf{d}\mathbf{z}
\\
&  + \int^{\infty}_{u_3} \int^{\infty}_{0} \int^{\infty}_{0} \int^{\infty}_{0} \int^{u_1}_{-\infty} \int^{\infty}_{u_1} \int^{u_2}_{-\infty} \phi\bigg{(}\mathbf{z} ,\mu^{\nu_3},\Sigma^{\nu_3}\bigg{)} \textbf{d}\mathbf{z} \bigg{]}, 
\end{align*}

where
\begin{align*}
\mu^{\nu_3}=\bigg{(} 0, 0,0,
0,
0,0, 0\bigg{)},
\end{align*}
and
\begin{equation*}
\Sigma^{\nu_3}=\begin{pmatrix}
1  & \frac{1}{2\sqrt{3}} & 0 & \frac{1}{\sqrt{6}} & \frac{1}{\sqrt{3}} & \frac{1}{2\sqrt{3}} &  \frac{\sqrt{2}}{\sqrt{3}}
\\
\frac{1}{2\sqrt{3}} & 1 & \frac{1}{2} & \frac{1}{2\sqrt{2}} &  \frac{1}{2} & 0  & \frac{1}{2\sqrt{2}}
\\
0 & \frac{1}{2} & 1 & -\frac{1}{2\sqrt{2}}  & 0 & \frac{1}{2} & 0
\\
\frac{1}{\sqrt{6}} & \frac{1}{2\sqrt{2}} & -\frac{1}{2\sqrt{2}} & 1 & \frac{1}{2\sqrt{2}}   & -\frac{1}{2\sqrt{2}}  &\frac{1}{2} 
\\
\frac{1}{\sqrt{3}} &\frac{1}{2} & 0 & \frac{1}{2\sqrt{2}} & 1 & \frac{1}{2} & \frac{1}{\sqrt{2}} 
\\
\frac{1}{2\sqrt{3}} &0 & \frac{1}{2}  & -\frac{1}{2\sqrt{2}} &\frac{1}{2} & 1 & \frac{1}{2\sqrt{2}}
\\
\frac{\sqrt{2}}{\sqrt{3}} & \frac{1}{2\sqrt{2}} & 0 &  \frac{1}{2} & \frac{1}{\sqrt{2}} & \frac{1}{2\sqrt{2}} & 1
\end{pmatrix}.
\end{equation*}
The type I error for a given treatment under the global null for the motivating example is
\begin{equation*}
\sum^{3}_{j=1} P(\nu_j).
\end{equation*}

\subsection{Simulations of type I error}
Figure \ref{Fig} gives the type I error for treatment 1 under multiple different values of $\theta_1$, $\theta_2$ and $\theta_3$. Values tested for $\theta_2$, $\theta_3$ are in the range of $-2\theta'$ to $2\theta'$. Values tested for $\theta_1$ are in the range of $-\theta'$ to $0$. 1,000,000 simulations of each scenario are run. The maximum value for type I error for treatment 1 is 0.02496 which is when $\theta_1=0$, $\theta_2=-2\theta'$ and $\theta_3=-2\theta'$.

\begin{figure}[h]
    \centering

    \begin{minipage}{0.48\textwidth}
        \centering
        \includegraphics[width=\textwidth]{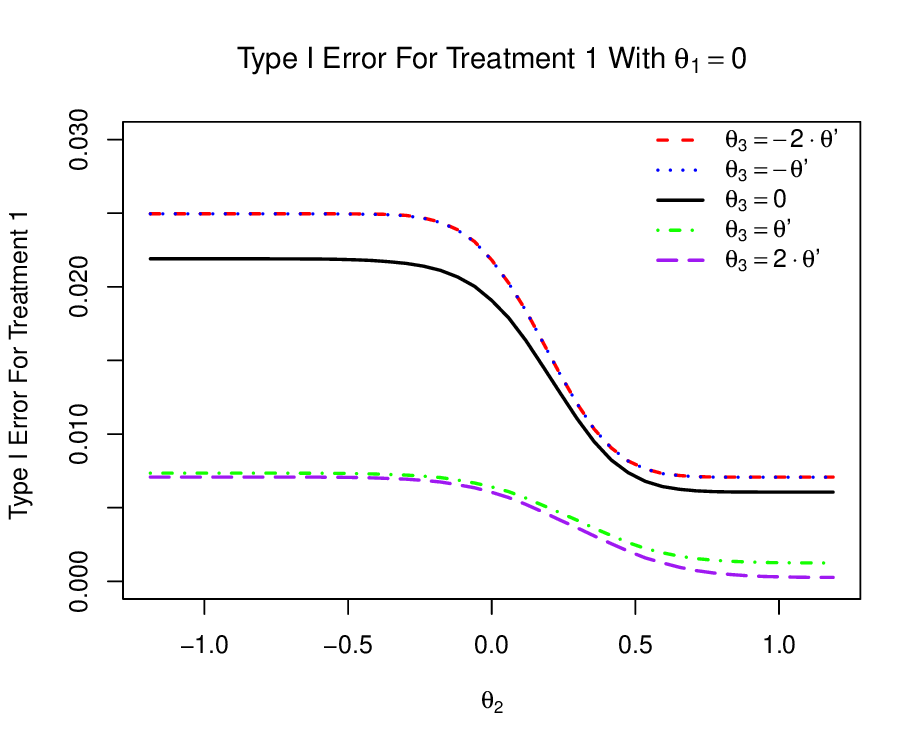}
        \caption*{(a)}
    \end{minipage}
    \hfill
    \begin{minipage}{0.48\textwidth}
        \centering
        \includegraphics[width=\textwidth]{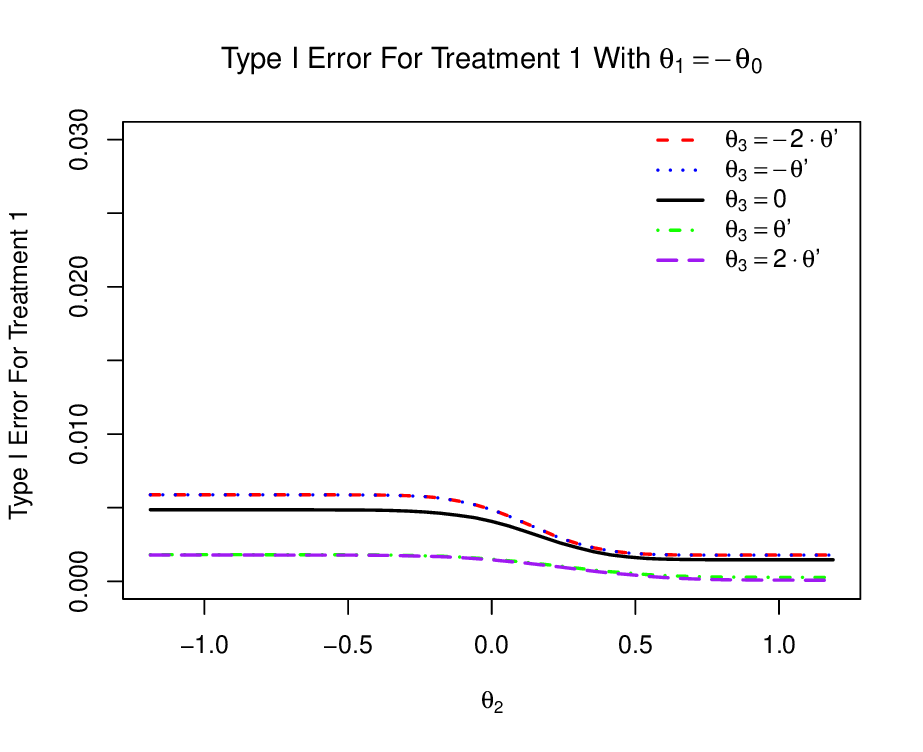}
        \caption*{(b)}
    \end{minipage}
    \hfill
    \begin{minipage}{0.48\textwidth}
        \centering
        \includegraphics[width=\textwidth]{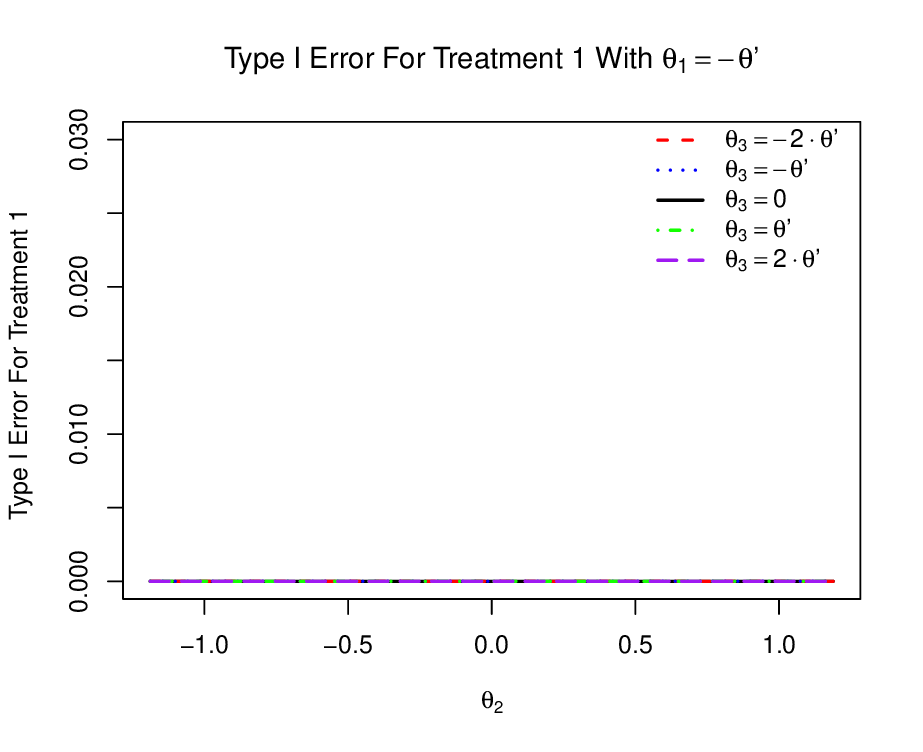}
        \caption*{(c)}
    \end{minipage}

    \caption{Type I error for treatment 1 under different values of  $\theta_1$, $\theta_2$ and $\theta_3$.}
    \label{Fig}
\end{figure}
\color{black}

\end{document}